\newcommand{\beq}{\begin{equation}}  
\newcommand{\eeq}{\end{equation}}  
\newcommand{\bea}{\begin{eqnarray}}  
\newcommand{\eea}{\end{eqnarray}}  
\newcommand\la{{\lambda}}  
\newcommand\al{{\alpha}}
\newtheorem{Thm}{Theorem} 
\newtheorem{Def}{Definition}  
\newtheorem{Lem}{Lemma}  
\newcommand\K{{\mathrm{K}}} 
\newcommand\R{{\mathbb{R}}}
\begin{document}
%%%%%%%%%%%%%%%%%%%%%%%%%%%%%%%
%%%%%%%%%%%%%%%%%%%%%%%%%%%%%%%
\title
{Stability of 
stationary 
solutions for nonintegrable peakon equations}
%%%%%%%%%%%%%%%%%%%%%%%%%%%%%%%
%%%%%%%%%%%%%%%%%%%%%%%%%%%%%%%
\author
{A.N.W. Hone\thanks{School of Mathematics, Statistics \& Actuarial Science, University of Kent, Canterbury, Kent, U.K.} 
$\,$and S. Lafortune\thanks{Department of Mathematics, College of Charleston, 66 George Street, Charleston, South Carolina, 29424, U.S. 
Email: lafortunes@cofc.edu, Phone: 843-953-5869, Fax: 843-953-1410.}}

\maketitle

\begin{abstract} 
The Camassa-Holm equation with linear dispersion was originally derived as an asymptotic 
equation in shallow water wave theory. Among its many interesting mathematical properties, 
which include complete integrability,  
perhaps the most striking is the fact that in the case where linear dispersion is absent 
it admits weak multi-soliton solutions - ``peakons'' - with a peaked shape corresponding 
to a discontinuous first derivative. There is 
a one-parameter family of generalized Camassa-Holm equations, most of which are not 
integrable, but which all admit peakon solutions. Numerical studies reported by Holm and 
Staley indicate changes in the stability of these and other 
solutions as the parameter varies through the family. 

In this article, we describe 
 analytical results on one of these bifurcation phenomena, 
showing that in a suitable parameter range there are stationary solutions   
- ``leftons'' - 
which are orbitally stable.   
\end{abstract} 

\section{Introduction} 
The family of partial differential equations 
\beq \label{bfamily} 
u_t - u_{xxt} +(b+1)uu_x=bu_x u_{xx} + u u_{xxx}, 
\eeq 
labelled by the parameter $b$, 
is distinguished by the fact that it includes two completely integrable 
equations, namely the Camassa-Holm equation (the case $b=2$ \cite{ch, ch2}), and the 
Degasperis-Procesi equation (the case $b=3$ \cite{dp, dhh}). Each of the two  
integrable equations arises as the compatibility condition 
for an associated pair of linear equations (a Lax pair), 
and the latter leads to other hallmarks of integrability, 
namely the inverse scattering transform, multi-soliton solutions 
\cite{Ma05,honejpa,ch2},  
an infinite number of conservation laws, and a bi-Hamiltonian 
structure. (The latter structure for the case $b=2$ was 
found in \cite{ff}.)  According to various tests for integrability, 
the cases $b=2,3$ are the only integrable equations within this family 
\cite{dp, wanghone, miknov, honep}.  

The Camassa-Holm equation was originally proposed 
as a model for shallow water waves \cite{ch, ch2}, and it is explained in 
\cite{dgh1, dgh} that the members of the  family of equations 
(\ref{bfamily}), apart from the case $b=-1$, are asymptotically equivalent by means of an 
appropriate Kodama transformation. The results of \cite{constantin} (see Proposition 2 therein, 
and also equation (3.8) in \cite{rossen}) show that, 
in a model of shallow water flowing over a flat bed, 
the solution $u$ of (\ref{bfamily}) corresponds 
to the horizontal component of velocity evaluated at the level line 
$\theta\in [0,1]$, where 
$ 
\theta = \sqrt{\frac{11b-10}{12b}}, 
$ 
which requires either $b\geq  10/11$ or $b\leq -10$.  
However, there continues to be  
debate in the literature about the 
precise range of validity of such models \cite{bm}. 

%There is 
Another aspect of the equations (\ref{bfamily}) that makes them 
the focus of much interest is the special solutions that they admit. 
Although (as already mentioned) there are multi-soliton solutions for $b=2,3$, 
these smooth solutions only exist on a zero background in the case where the 
equation has additional linear dispersion terms (terms proportional to 
$u_x$ and/or $u_{xxx}$, that is); such terms can be removed 
by a combination of a Galilean transformation together with a 
shift $u\to u+u_0$, which (for $u_0\neq 0$) changes the boundary conditions 
at spatial infinity. In the case of vanishing boundary conditions 
at infinity, there are no smooth multi-soliton solutions, but 
Camassa and Holm noticed that for $b=2$ and any positive integer $N$ there are 
instead weak solutions given by  
\beq \label{multipeakon} 
u(x,t) = \sum_{j=1}^N p_j(t)e^{-|x-q_j(t)|}, 
\eeq 
which have the form of a linear superposition of $N$ peaked waves whose 
positions $q_j$ and amplitudes $p_j$ are respectively the canonically conjugate 
coordinates and momenta in a finite-dimensional  
Hamiltonian system that is completely integrable in the Liouville-Arnold sense.
When $b=2$, Hamilton's equations correspond to the geodesic equations 
for an $N$-dimensional manifold with coordinates $q_1,\ldots ,q_N$ and 
co-metric $g^{ij}=e^{-|q_i-q_j|}$.  
The form of the multi-peakon solutions (\ref{multipeakon}) persists for all values 
of $b$, although in general the Hamiltonian system governing the time 
evolution of the positions and amplitudes is non-canonical \cite{hh}, 
and for $N>2$ this finite-dimensional 
dynamics is expected to be integrable 
only when $b=2,3$. 

In the case $b=2$, it is known that the Camassa-Holm equation is of 
Euler-Poincar\'e type, corresponding to 
a geodesic flow with respect to the $H^1$ metric 
on a suitable diffeomorphism group \cite{gert}; the geodesic equations for 
the $N$-peakon solutions
(\ref{multipeakon})  
are a finite-dimensional reduction of this flow \cite{epdiff}.  Although the standard geodesic interpretation, in 
terms of a metric,  is 
lost for other values of $b$, it was recently shown that the periodic case of the Degasperis-Procesi and the other 
equations in the $b$  family can be regarded as geodesic equations for a non-metric connection on the diffeomorphism group 
of the circle \cite{ek}. 

\begin{figure}\centering 
\scalebox{0.25}{ %[0.55]{
\includegraphics[angle=0]{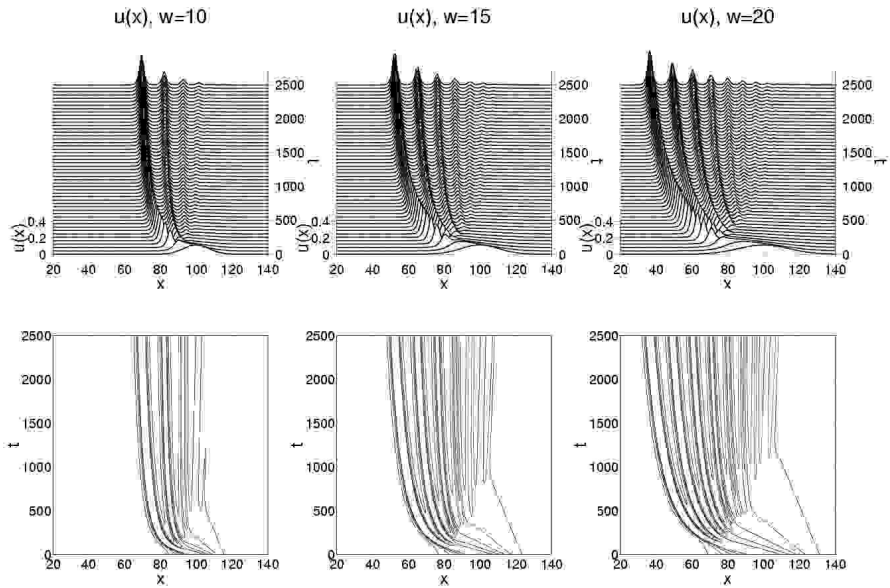}}
\caption{\small{
Leftons evolve from Gaussian initial profiles of different 
widths in the case $b=-3$.  (Reproduced with kind permission from \cite{Holm03}.) 
 }}
\label{numerics} 
\end{figure}

Holm and Staley made an extensive numerical study of solutions of 
(\ref{bfamily}) for different values of $b$, by starting with different 
initial profiles and observing how they evolved with time and with changing $b$ 
\cite{Holm03a, Holm03}. 
They observed that, broadly speaking, there are three distinct parameter regimes 
with quite different behaviour, separated by bifurcations at $b=1$ and $b=-1$, as 
follows:

\vspace{.1in}
\noindent {\bf Peakon regime}: For $b>1$, arbitrary initial data asymptotically separates out into 
a number of peakons as $t\to\infty$. 

\vspace{.05in}
\noindent {\bf Ramp-cliff regime}: For $-1<b<1$, solutions behave asymptotically like a combination 
of a ``ramp''-like solution of Burgers equation (proportional to $x/t$), together with 
an exponentially-decaying tail (``cliff''). 

\vspace{.05in}
\noindent {\bf Lefton regime}: For $b<-1$, arbitrary initial data moves to the left and asymptotically separates out into 
a number of ``leftons'' as $t\to\infty$, which are smooth stationary solitary waves. (See Figure 1.)  

\vspace{.1in}
The behaviour observed separately in each of the parameter ranges $b>1$ and $b<-1$ can be understood as 
particular instances of the {\it soliton resolution conjecture} \cite{tao}, 
a vaguely defined conjecture which states that for suitable dispersive wave equations, 
solutions with ``generic'' initial data will decompose into a finite number of 
solitary waves plus a radiation part which decays to zero.  
In this article, our aim %of our investigation 
is to provide a first step towards explaining 
this phenomenon analytically for the equation (\ref{bfamily}) in 
% explanation for some of these phenomena. we are primarily concerned with 
the ``lefton'' regime $b<-1$. 
We show that  
in this parameter range 
a single lefton solution is orbitally stable, by applying the approach of Grillakis, Shatah and Strauss in \cite{Grillakis87}.   
The main ingredients required for our stability analysis are the Hamiltonian structure and conservation laws for 
(\ref{bfamily}). The lefton solutions are a critical point for a functional which is combination of the Hamiltonian
and a Casimir, but the second variation has some negative spectrum, so it 
is not possible to apply the energy-Casimir method as in \cite{ecmethod}.

In the next section we describe the Hamiltonian structure and conservation laws of  
(\ref{newb}) that exist for all $b$. After that we consider orbital stability of stationary waves when $b<-1$: 
see   
Theorem 2 in section 3 for the main result of the paper.
We make some remarks about other values of $b$  
in our conclusions.

\section{Conserved quantities 
and Hamiltonian structure} 

In order to better understand the properties of each equation  in the family (\ref{bfamily}), 
it is convenient  
to rewrite it in the following way: 
\beq \label{newb} 
m_t +um_x +bu_x m =0, \qquad m=u-u_{xx}. 
\eeq 
This can be regarded as a nonlocal evolution equation for $m$, 
where (at each time $t$) the field $u$ is obtained from $m$ 
by the convolution 
\begin{equation}
\label{g}
u(x) = g*m (x)= \int_{-\infty}^\infty g(x-y) m(y) dy, \qquad 
g(x)=\frac{1}{2}\exp(-|x|).  
\end{equation}
Henceforth we use the symbol $\int$ without limits to  
denote an integral $\int_{\R}$ over the whole real line.

From  the equation (\ref{bfamily}) written in the nonlocal form (\ref{newb})  it is straightforward to verify 
that, for any value of $b\neq 0,1$, there are at least three different functionals 
that are formally conserved by the 
time evolution of $m$ \cite{dhh2}, namely %the Hamiltonian 
{\beq 
\label{ham} 
E=\int %_{\cb\mathbb{R}} 
m\, dx, \qquad  C_1 = \int %_{\cb\mathbb{R}} 
m^{1/b}\, dx, 
\eeq
and 
\beq\label{casimirs} 
C_2 = \int %_{\cb\mathbb{R}} 
m^{-1/b}\left(\frac{m_x^2}{b^2m^2}+1\right)\, dx.
\eeq} 
In saying that these quantities,  each of which has the form 
$\int  \mathcal{T}\, dx$ for some density $\mathcal{T}$, are formally conserved, 
we mean %only 
that there is a flux $\mathcal{F}$ such that the conservation law 
$ 
\frac{\partial \mathcal{T}}{\partial t} = \frac{\partial \mathcal{F}}{\partial x}  
$ 
holds for any smooth solution of the equation (\ref{bfamily}).  
If the integral $\int \mathcal{T}\, dx = \int_{\R} \mathcal{T}\, dx$ exists, and the flux $\mathcal{F}$ vanishes 
at infinity, then clearly $d/dt \int  \mathcal{T}\, dx=0$ 
for strong solutions that decay sufficiently fast 
at infinity.

The smooth solutions of (\ref{bfamily}) can be derived from a variational principle $\delta S=0$, by 
starting from the conservation law $(m^{1/b})_t+(um^{1/b})_x=0$ associated with $C_1$ 
and introducing a potential $\varphi$ such that $\varphi_x =m^{1/b}$, $\varphi_t=-um^{1/b}$. The action is 
$$ 
S=\int\int_{\mathbb{R}^2} \mathcal{L}\, dx\, dt, \qquad 
\mathcal{L} = \frac{\varphi_t}{2\varphi_x} \Big( (\log \varphi_x )_{xx}+1\Big) -\frac{\varphi_x^b}{b-1} 
$$ 
(where we have changed a sign compared with  \cite{dhh2}). After rearranging, 
the Euler-Lagrange equation gives 
$$ 
\frac{\partial}{\partial x}\left( 
\frac{\varphi_t}{\varphi_x} -\Big(\frac{\varphi_t}{\varphi_x}\Big)_{xx}+\varphi_x^b\right) =0, 
$$
which (up to an integration with respect to $x$) is equivalent to (\ref{newb}). 
Noether's theorem applied 
to the time translation symmetry $t\to t+s$ leads to the conserved density 
$$ 
\varphi_t\, \frac{\partial\mathcal{L}}{\partial\varphi_t}- \mathcal{L}=   \frac{\varphi_x^b}{b-1}, 
$$ which (up to scaling) corresponds to $E$ above; the space translation $x\to x+s$ 
leads to an equivalent density. Applying Noether's theorem to the symmetry of shifting the potential
$\varphi \to \varphi +s$ gives the density 
$$ 
\frac{\partial\mathcal{L}}{\partial\varphi_t}=\frac{1}{2\varphi_x}\Big( (\log \varphi_x )_{xx}+1\Big), 
$$ 
which corresponds to $C_2$. The same action $S$ is also valid for $b=0$, but needs to be 
modified slightly for $b=1$.

There is another type of conservation law which holds for 
solutions of  (\ref{newb}), which is the fact 
that 
\beq \label{diffeo} 
m(q,t) q_x^b=m(x,0)
\eeq 
for all $t$ in the domain of existence, where 
$x \mapsto q(x,t)$ is a diffeomorphism of the line defined from the solution 
of the 
initial value problem 
\beq \label{ivp} 
q_t =u(q,t), \qquad  
%with initial condition $ 
q(x,0)=x.   
\eeq 
%%% 
(See \cite{zhou}, 
and also Proposition 9 in \cite{ek} for the case 
of the circle.) 
By adapting McKean's argument  
for the case $b=2$ \cite{mckean}, this implies that if 
a %smooth 
solution is initially positive, %$m(x,0)>0$ , 
then $m(x,t)>0$  everywhere as long as the solution exists. 
In the next section we shall restate %make use of 
a stonger result along these lines in the  case $m\in H^1$, 
which is proved by Zhou in \cite{zhou}.

The choice of nomenclature for the above functionals comes from the fact 
that, for any $b$, the skew-symmetric operator 
\beq\label{bop}
B=-(bmD_x+m_x)(D_x-D_x^3)^{-1}(bD_x m-m_x) 
\eeq   
is  a Hamiltonian  operator 
\cite{wanghone, hh}, 
in the sense that it defines a Poisson bracket 
$$ 
\{ \,F,G\, \} = \left<\frac{\delta F}{\delta m}, B \,\frac{\delta G}{\delta m}\right>, 
$$ 
between any pair of smooth functionals $F,G$, where 
$<f,g>=\int %_\R 
fg\, dx$ denotes the usual %$L^2$ scalar product 
pairing between real functions 
on the line.  
Note that in (\ref{bop}) and 
elsewhere we  use $D_x$ to mean differentiation with respect to $x$.
For suitable functions $f$ %\in \mathrm{im} \,D_x$, 
the inverse operator 
in (\ref{bop}) is defined by  $(D_x-D_x^3)^{-1} f= G*f$, 
taking the convolution 
with $G(x)=\frac{1}{2}\mathrm{sgn}(x)(1-\exp(-|x|))$. %\cite{hh}. 

The quantities $C_1$ and $C_2$ are the Casimirs for this bracket, 
satisfying $\{\, F,C_j\,\}=0$ for any $F$, for $j=1,2$. 
For any $b\neq 1$, the equation (\ref{newb}) can be written in 
Hamiltonian form as 
\beq\label{hamform} 
m_t = \frac{1}{b-1}\, B\, \frac{\delta E}{\delta m}, 
\eeq
with $E$ as in (\ref{ham}) being the Hamiltonian (up to scale); 
for $b=1$ (when $E=C_1$) one should take $\int %_{\cb\mathbb{R}} 
m \log m \, dx$ as the Hamiltonian.

In fact, depending on the solutions considered, one or more of these functionals 
may not be defined. For example, in the case of the leftons, which are the solutions of interest here, we have that $b<-1$ 
and 
$m$ is smooth, rapidly decaying and everywhere positive, so that
$E$ and $C_2$ both exist while $C_1$ does not. 
In the case where $b$ is positive, on the other hand, 
if $m$ is sufficiently  smooth, rapidly decaying  
and positive, %non-negative would be OK here    
then we would have that only $E$ and $C_1$ exist, and $C_2$ does not. 
In the case of a single peakon given by $u=c\exp (-|x-ct|)$, 
the field $m$ is given 
by a delta function, $m=2c\delta(x-ct)$, and similarly for the multi-peakon solution 
(\ref{multipeakon}) it is 
$m=2%\sum_{j=1}^N 
\sum_jp_j(t)\delta(x-q_j(t))$,  
so that the functional $E$ makes sense, but $C_1$ and $C_2$ do not. 

It is worth mentioning that for the integrable cases of (\ref{bfamily}),  the nonlocal Hamiltonian operator (\ref{bop}) 
%{\cb 
defines 
just one of a set of compatible Hamiltonian 
structures. When $b=2$ the first Hamiltonian operator is given by 
$B_1=D_x(1-D_x^2)$, and the second is  
$B_2=mD_x+D_xm$, where the latter 
% the second Hamiltonian structure
defines the Lie-Poisson bracket (the dual of the Euler-Poincar\'e structure); in that case,  the nonlocal operator   
(\ref{bop}) %{\cb 
defines 
the third Hamiltonian structure, %{\cb 
and is given by $B_2B_1^{-1}B_2$  up to scaling. For $b=3$, 
$B$ in (\ref{bop}) %{\cb 
defines 
the second Hamiltonian structure, while 
$B_1=D_x(1-D_x^2)(4-D_x^2)$ is 
the first Hamiltonian operator   
\cite{dhh}.

\section{Stability of the stationary solution} 

In what follows we shall primarily be interested in the ``lefton'' solutions. 
A single lefton is a stationary solution of (\ref{bfamily}) given by the 
explicit formula \cite{dhh2}
\beq\label{lefton} 
u = A\,\Big(\cosh \gamma (x-x_0)\Big)^{-\frac{1}{\gamma}}, \qquad \gamma = -\frac{b+1}{2}  
\eeq 
(independent of $t$), where the position $x_0$ and amplitude $A$ are arbitrary 
constants. 
For $b<-1$, corresponding to $\gamma >0$, when $A>0$ this is a positive, smooth solution 
decaying like $e^{-|x|}$ as $|x|\to\infty$; so asymptotically it has the same shape 
as a peakon solution.   From (\ref{newb}), stationary solutions satisfy $u^bm=$constant.

\subsection{Overview of the theory}

Following \cite{Grillakis87}, we consider  orbital stability, 
which means nonlinear stability for solutions of Hamiltonian systems 
up to drifts along the action of Hamiltonian symmetries.
Suppose that a system in  Hamiltonian form is defined
on a real Hilbert space $\mathcal{X}$, with energy functional (Hamiltonian) $E$, 
and admits a one-parameter Lie group of %point 
Hamiltonian symmetries  
$T_s: \mathcal{X}\rightarrow \mathcal{X}$ (where $s$ is the parameter of the group), 
with infinitesimal generator $T'_0$, where $T_s$ is a unitary operator 
on $\cal X$. 
The Hamiltonian system is %have the form 
\beq \label{hams} 
w_t = J\, \frac{\delta E}{\delta w}, 
\eeq 
with the (skew-symmetric)  Hamiltonian operator $J: \, \mathcal{X}^* \to \mathcal{X}$,  
but one considers weak solutions in $\mathcal{X}$, namely $w$ which satisfy 
\beq\label{weak} 
\frac{d}{dt}\left< \psi , w\right> = -\left< \frac{\delta E}{\delta w}, J\, \psi\right>, 
\eeq 
for all $\psi \in D(J)\subset \mathcal{X}^*$, where $<,>$ denotes the 
pairing between $\mathcal{X}$ and $\mathcal{X}^*$. 
The natural isomorphism $I: \mathcal{X}\to \mathcal{X}^*$ is defined by 
$<Iu,v>=(u,v)$, where $(,)$ is the inner product on $\mathcal{X}$. 

Then one considers  the stability of particular solutions, 
which physically correspond to {\sl{bound states}} 
or {\sl{solitary waves}},
for which $w(t)$ takes the form
\begin{equation}
\label{solform}
%w(t)=
T_{\omega t}\, \phi ,
\end{equation}
for some fixed $\phi\in \mathcal{X}$, depending on the 
parameter $\omega\in\R$, 
which is a critical point  of the 
functional
\begin{equation}
\label{fnal} 
F=E-\omega Q.
\end{equation}
%where
The conserved functional   $Q$ %is another 
(often identified as the {\sl{charge}}) arises from the symmetry via an %the 
%\blue{an 
infinite-dimensional version of 
Noether's theorem \cite{Olver}; this means that the Hamiltonian vector field 
associated with $Q$ generates the 
symmetry $T_s$, in the sense that 
$w_s = J\, \frac{\delta Q}{\delta w}\equiv T_0'\, w$.
%It is assumed that 
Both $E$ and $Q$ are invariant 
under the symmetry group.

The three main assumptions in \cite{Grillakis87} can be paraphrased thus: 
% we are paraphrasing/summarizing? 
\begin{itemize}
\item[(i)] Local existence: For each $w_0 \in \mathcal{X}$ the solution of (\ref{weak}) 
with initial data $w(0)=w_0$
exists in some time interval $t\in [0,t_0 )$, for some $t_0 >0$, and both $E$ and $Q$ are conserved: 
$E(w(t))=E(w_0)$, $Q(w(t))=Q(w_0)$ for all $t$ in this interval. 
\item[(ii)] Existence of bound states: There is a smooth map $\omega \mapsto \phi=\phi_\omega$ 
from some %open 
parameter interval $(\omega_1,\omega_2 )$  
%of values of the real  $\omega $
into $\mathcal{X}$, such that $\phi$ is a critical point for the 
functional $F$ in (\ref{fnal}), i.e. $\frac{\delta E}{\delta w} (\phi ) - \omega \frac{\delta Q}{\delta w} (\phi )=0$; 
also $\phi\in D((T'_0)^2)$
and $T'_0\phi\neq 0$. 
\item[(iii)] For each $\omega\in(\omega_1,\omega_2 )$,   the second variation 
$\frac{\delta^2 F}{\partial w^2}(\phi)$ has exactly one negative simple eigenvalue, has 
its kernel spanned by $T'_0\phi$, and the rest of its spectrum is positive and bounded away from zero. 
\end{itemize}

In the above, of central importance is the second variation of $F$, 
$${\tt H}:=\frac{\delta^2 F}{\partial w^2}(\phi) = 
\frac{\delta^2 E}{\partial w^2}(\phi)-\omega \frac{\delta^2 Q}{\partial w^2}(\phi), 
$$ 
which is a self-adjoint operator from $\mathcal{X}$ to $\mathcal{X}^*$. Its spectrum is defined to be the set of 
$\la\in\R$ such that  ${\tt H}-\la I$ is not invertible.  
Evaluation of the functional $F$ at $\phi$ defines a function $d(\omega )$. 
The solution (\ref{solform}) is 
the $\phi$-orbit $\{ T_{\omega t}\phi \, | \, t\in\R\}$, and its stability is defined in terms of the norm $\| \cdot \|$ on $\cal X$, as follows. 

\begin{Def}\label{origdef} 
The $\phi$-orbit is {\sl{%orbitally 
stable}} if for all $\epsilon >0$ there exists $\delta >0$ 
with the following property. If $w(t)$ is a solution to (\ref{hams}) in some time interval $[0,t_0)$, 
such that $\| w(0)-\phi\| <\delta$, 
then $w(t)$ is defined for $0\leq t<\infty$ and 
$$
\sup_{0<t<\infty} \inf_{s\in \mathbb{R}} \| w(t)-T_s\phi \|  <\epsilon.
$$
\end{Def}

The above definition can be modified in the case where the solution 
$w(t)$ may exhibit blow up in finite time, but this will not be 
needed for our purposes.
One of the main results of \cite{Grillakis87} is the following.

\begin{Thm}\label{gss} 
Given assumptions (i)-(iii) above, for  $\omega\in(\omega_1,\omega_2 )$ the $\phi$-orbit is stable 
if the function $d$ satisfies $d''(\omega)>0$. 
\end{Thm}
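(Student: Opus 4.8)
The plan is to reconstruct the Lyapunov-functional argument of \cite{Grillakis87}, using $F=E-\omega Q$ as an almost-conserved energy that is, up to the orbit direction and the $Q$-constraint, convex near $\phi$. By assumption (ii) the bound state $\phi=\phi_\omega$ is a critical point of $F$, so $F$ has no first-order term at $\phi$ and its behaviour there is governed by the second variation ${\tt H}$; by assumption (i) both $E$ and $Q$ are conserved along the flow, hence so is $F$. The two obstructions to coercivity of ${\tt H}$ are its single negative eigenvalue and its one-dimensional kernel $\mathrm{span}\{T'_0\phi\}$: the kernel is the tangent to the $\phi$-orbit and will be absorbed by measuring distance to the whole orbit, while the negative direction must be suppressed by restricting to the conserved level set of $Q$. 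The entire role of the hypothesis $d''(\omega)>0$ is to guarantee that this restriction does suppress it.

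First I would record two differential identities. Differentiating $d(\omega)=E(\phi_\omega)-\omega Q(\phi_\omega)$ and using $\frac{\delta F}{\delta w}(\phi_\omega)=0$ gives $d'(\omega)=-Q(\phi_\omega)$. Differentiating the critical-point equation $\frac{\delta E}{\delta w}(\phi_\omega)-\omega\frac{\delta Q}{\delta w}(\phi_\omega)=0$ in $\omega$, and writing $\psi_\omega:=\partial_\omega\phi_\omega$ (which exists by the smoothness in (ii)), yields ${\tt H}\,\psi_\omega=\frac{\delta Q}{\delta w}(\phi_\omega)$. Combining these,
$$d''(\omega)=-\frac{d}{d\omega}Q(\phi_\omega)=-\langle \tfrac{\delta Q}{\delta w}(\phi_\omega),\psi_\omega\rangle=-\langle {\tt H}\,\psi_\omega,\psi_\omega\rangle,$$
so that $d''(\omega)>0$ is equivalent to $\langle {\tt H}\,\psi_\omega,\psi_\omega\rangle<0$: the vector $\psi_\omega$ points into the negative cone of ${\tt H}$. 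Moreover, invariance of $Q$ under $T_s$ gives $\langle \frac{\delta Q}{\delta w}(\phi),T'_0\phi\rangle=0$, so $\frac{\delta Q}{\delta w}(\phi)$ is orthogonal to the kernel of ${\tt H}$ (consistent with ${\tt H}\,\psi_\omega=\frac{\delta Q}{\delta w}$) and the orbit direction is tangent to the level set of $Q$.

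The heart of the proof is a coercivity lemma: if $d''(\omega)>0$ then there is $c>0$ with $\langle {\tt H}\,y,y\rangle\ge c\|y\|^2$ for every $y$ obeying the two constraints $\langle IT'_0\phi,y\rangle=0$ and $\langle \frac{\delta Q}{\delta w}(\phi),y\rangle=0$. I would prove this by splitting $y$ along the spectral subspaces furnished by (iii) — the negative eigenvector, the kernel $T'_0\phi$, and the positive subspace on which ${\tt H}\ge\sigma>0$ — and minimising $\langle {\tt H}\,y,y\rangle$ over the hyperplane $\langle \frac{\delta Q}{\delta w}(\phi),y\rangle=0$ by a Lagrange multiplier; the extremiser is proportional to $\psi_\omega={\tt H}^{-1}\frac{\delta Q}{\delta w}(\phi)$ (well defined modulo the kernel by the orthogonality just noted), and a short computation shows the constrained infimum is strictly positive exactly when $\langle \psi_\omega,\frac{\delta Q}{\delta w}(\phi)\rangle=\langle {\tt H}\,\psi_\omega,\psi_\omega\rangle<0$, i.e. when $d''(\omega)>0$; the first constraint then removes the kernel direction and upgrades positivity to the uniform bound $c\|y\|^2$. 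This step is the principal obstacle: in infinite dimensions one must control the continuous spectrum, use the spectral gap in (iii) both to invert ${\tt H}$ on its range and to pass from bare positivity on the constraint to a uniform lower bound, and verify that the quadratic remainders in the expansions below are genuinely $o(\|y\|^2)$ in the norm of $\mathcal{X}$.

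Finally I would close the Lyapunov argument. For $w$ in a tubular neighbourhood of the orbit, the condition $T'_0\phi\neq0$ with the implicit function theorem yields a smooth modulation $w=T_{s(w)}(\phi_{\tilde\omega(w)}+y)$ with $\langle IT'_0\phi_{\tilde\omega},y\rangle=0$ and $\tilde\omega(w)$ determined by $Q(\phi_{\tilde\omega(w)})=Q(w)$, solvable since $\frac{d}{d\omega}Q(\phi_\omega)=-d''(\omega)<0$. Since stability for large $\epsilon$ is automatic, I fix a small $\epsilon_0>0$ for which this decomposition and the remainder estimate hold, and suppose for contradiction that no $\delta$ works in Definition~\ref{origdef} for $\epsilon=\epsilon_0$: then there are data $w_n(0)\to\phi$ whose solutions first reach distance $\epsilon_0$ from the orbit at times $t_n$. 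Conservation gives $Q(w_n(t_n))=Q(w_n(0))$ and $E(w_n(t_n))=E(w_n(0))$, so $\tilde\omega_n:=\tilde\omega(w_n(t_n))$ is constant in $t$, tends to $\omega$, and $Q(w_n(t_n))=Q(\phi_{\tilde\omega_n})$ exactly. Expanding $E-\tilde\omega_n Q$ about $\phi_{\tilde\omega_n}$, the vanishing first variation and the equal $Q$-values give $E(w_n(t_n))-E(\phi_{\tilde\omega_n})=\tfrac12\langle {\tt H}_{\tilde\omega_n}y_n,y_n\rangle+o(\|y_n\|^2)$, the $Q$-constraint holding to the needed order since $\langle \frac{\delta Q}{\delta w}(\phi_{\tilde\omega_n}),y_n\rangle=O(\|y_n\|^2)$. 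As $\|y_n\|$ is comparable to $\epsilon_0$, the coercivity lemma (valid at $\tilde\omega_n$ by continuity of $d''$) forces $E(w_n(t_n))-E(\phi_{\tilde\omega_n})\ge c'\epsilon_0^2$ for some $c'>0$ and all large $n$; but this quantity equals $E(w_n(0))-E(\phi_{\tilde\omega_n})\to E(\phi_\omega)-E(\phi_\omega)=0$. The contradiction proves that the $\phi$-orbit is stable.
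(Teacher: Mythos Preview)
The paper does not actually prove this theorem: it is quoted verbatim as ``one of the main results of \cite{Grillakis87}'' and simply cited, so there is no in-paper proof to compare against. Your reconstruction of the Grillakis--Shatah--Strauss argument is sound --- the identities $d'(\omega)=-Q(\phi_\omega)$, ${\tt H}\psi_\omega=\tfrac{\delta Q}{\delta w}(\phi_\omega)$, $d''(\omega)=-\langle{\tt H}\psi_\omega,\psi_\omega\rangle$, the constrained coercivity lemma, and the Lyapunov/contradiction closing are exactly the ingredients of the original proof.

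One point of comparison is worth flagging. In your closing step you modulate both along the group \emph{and} along the family, choosing $\tilde\omega(w)$ so that $Q(\phi_{\tilde\omega})=Q(w)$ exactly (using $\tfrac{d}{d\omega}Q(\phi_\omega)=-d''(\omega)\neq 0$). When the paper later implements this machinery for its own Theorem~\ref{thethe}, it instead keeps the parameter fixed, writes $m(\cdot+s(m))=(1+a)m_0+y$ with $y$ orthogonal to $E'(m_0)$, and shows $a=O(\|v\|^2)$ from the constraint $C_2(m)=C_2(m_0)$ (having first replaced the true solution $m_n(\cdot,t_n)$ by a nearby $v_n$ satisfying this constraint exactly). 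Both routes are standard and lead to the same coercivity estimate; your variant is cleaner when the map $\omega\mapsto\phi_\omega$ is smooth and $d''\neq 0$, while the paper's variant avoids having to control the $\omega$-dependence of the coercivity constant and is closer to the presentation in \cite{Grillakis87} and \cite{cstrauss3}.
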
 

As stated above, 
the aforementioned three sets of assumptions 
are enough to obtain a sufficient condition for stability.
In order to make the stronger statement that the convexity of the function 
$d(\omega )$ is both necessary and sufficient %condition 
for stability, as in Theorem 2 of \cite{Grillakis87}, 
a slightly more stringent version of the assumptions in (ii) 
and a fourth condition on the Hamiltonian operator are required. (See the remark made on p.167 of \cite{Grillakis87}.) In this paper, we will only require the sufficient condition for the stability of the leftons.

Grillakis et al. also explain how (with minor alterations) their approach 
is valid for solutions defined in a Banach space. % ${\cal X}$. 
In the rest of this section, we consider appropriate modifications of their approach for the lefton solutions (\ref{lefton}).  
In particular, it will be necessary to consider stability in 
a certain Banach subspace of a particular Hilbert space, and 
we will discuss below 
how the approach of  \cite{Grillakis87} applies in this case.

\subsection{Choice of a suitable Banach space}

For the lefton solution (\ref{lefton}), the corresponding 
field $m$ has the form $m=m_0(x)$, where 
\beq\label{mlefton} 
m_0 = A \frac{1-b}{2}
\Big( \cosh \gamma (x-x_0) \Big)^{\frac{b}{\gamma}}, 
\qquad \mathrm{with} \quad \gamma = -\frac{b+1}{2}>0  
\eeq 
for $b<-1$,  which is positive and  smooth, and decays like $e^{-|bx|}$ at infinity. 

To begin with, we need to show that $m_0$ is a critical point of a functional defined on an appropriate % suitable 
space. 
As noted above, 
on the real line 
the functional $C_1$ diverges for $b$ negative, so we want to realize the solution above as the critical point 
for a specific linear combination of the functionals $E$ and $C_2$. The main technical difficulty in this case concerns 
the functional $C_2$, which is only defined for certain positive 
(or non-negative) $m$. We can redefine $C_2$ for negative $m$ 
by introducing modulus signs, but the integrand will not be smooth wherever $m$ has a zero; for this reason 
we would like to consider solutions of (\ref{newb}) with $m>0$ everywhere.

The crucial observation to make is the fact that 
the solution (\ref{mlefton}) 
satisfies the first order 
differential equation 
\beq \label{mfirst} 
m_x^2=b^2\left(m^2-\frac{m^{3+1/b}}{k}\right).
\eeq 
From this it follows that  $m=m_0$  is a critical point for the functional
\begin{equation}
\label{functdef}
F = -E + k\, C_2, %\qquad k = \left(A\frac{1-b}{2}\right)^{1+1/b}, 
\end{equation}
corresponding to the value 
\beq\label{kval} 
k = \left(A\frac{1-b}{2}\right)^{1+1/b}. 
\eeq 
Indeed, for suitable smooth $m,v$ such that $m+\epsilon v$ is positive and $F$ is defined there whenever $|\epsilon|$ is small enough, 
%{\cv{then as long as $v$ converge to zero fast enough as $|x|\rightarrow\infty$}}, 
the first variation  in the direction $v$ is 
$\delta F(m)\, v =\lim_{\epsilon\to 0} \frac{d}{d\epsilon} F(m+\epsilon v)=<\frac{\delta F}{\delta m}, v>$, hence 
\begin{equation}
\label{1strel}
\begin{array}{lll}
\displaystyle{\frac{\delta F}{\delta m}} & \displaystyle{=} & \displaystyle{-\frac{\delta E}{\delta m} + k\, \frac{\delta C_2}{\delta m}} \\ 
&\displaystyle{ =} & 
\displaystyle{-1 + k m^{-1/b-1}  \left( 
\frac{(1+2b)}{b^3} \frac{m_x^2}{ m^{2}} -\frac{2}{b^2}\frac{ m_{xx}}{m}- \frac{1}{b} 
\right).}
\end{array} 
\end{equation}
The latter expression vanishes for the lefton solution: %$m=m_0$: 
$\frac{\delta F}{\delta m} (m_0) = 0$ for $m_0$ given by (\ref{mlefton}) 
and $k$ given by (\ref{kval}).

In order to apply the results of  \cite{Grillakis87} we must restrict to a suitable 
space  in which the functionals $E$ and $C_2$ are twice differentiable, at least near to $m_0$. 
To do so, we first introduce the weight
\beq \label{wt}
\al  
:= %\equiv 
m_0^{-2-1/b}=\left(A \frac{1-b}{2}\right)^{-2-1/b} \, \Big(\cosh \gamma (x-x_0)\Big)^{-\frac{2b+1}{\gamma}},
\eeq 
and consider the equation (\ref{newb}) defined in the space 
$L^2_\al := L^2(\R ,\al\, dx)$. With the standard pairing $<,>$, this gives the isomorphism 
%$I:\, 
$u\mapsto \al u$ from 
$L^2_\al$ to its dual. 
%$\mathcal{X}=L^2_\al$ to $\mathcal{X}^*$. 
The reason for 
this choice of the weight $\alpha$ will become clear shortly when we consider the second variation of $F$.

The second variation of $E$ is zero, so the entire contribution to the second variation of $F$ comes from $C_2$. 
Assuming that $C_2(m+\epsilon v)$ is defined for suitably smooth $m,v$, we have 
\beq \label{2nd} 
\delta^2 C_2 (m, v) :=   
\lim_{\epsilon\to 0} \frac{d^2}{d\epsilon^2} F(m+\epsilon v) 
= \int (\mathrm{P}\, v_x^2 + \mathrm{Q}\, v^2)\, dx,
\eeq  
after performing an integration by parts, where 
$
\mathrm{P}=2m^{-2-1/b} / b^2$, 
and  
$$ 
\mathrm{Q}=\frac{m^{-4-1/b}}{b^4}\Big(2b(1+2b)m_{xx} m-(1+2b)(1+3b)m_x^2+b^2(1+b)m^2\Big).
$$
Evaluating this at $m=m_0$ and using (\ref{mfirst}) gives 
$$ 
\delta^2 C_2 (m_0, v) 
 = \frac{2}{b^2}\int %_{\cb\mathbb{R}} 
\Big(\al \, v_x^2 - b(b+1)\al \, v^2\Big)\, dx, 
$$ 
with $\al$ as in (\ref{wt}). 
To ensure that the second variation of $C_2$  in the direction $v$  is defined at $m_0$, we require 
that $v$ belongs to the Hilbert space  $H^1_\al:=H^1(\R ,\al\, dx)$  
which has the  inner product 
$$ 
(v,w)_\al = \int (vw+v_xw_x) \, \al\, dx,
$$ 
and corresponding norm 
$|| \, v \, ||_\alpha=\sqrt{(v,v)_\al}$,  
so that $ |\delta^2 C_2 (m_0, v)|
\leq K\, || v ||_\alpha^2 $ 
for a universal constant $K$.  
Hence we should consider solutions of (\ref{newb}) 
with  $m\in H^1_\al$. 
Since $\al$ is uniformly bounded away from zero, 
$H^1_\al$ is a subspace of $H^1$. 

However, as we shall explain  further below, 
there is no neighbourhood of $m_0$ in $H^1_\al$ where $C_2$ 
exists, which leads us to consider 
a subspace $\mathcal{Z}\subset H^1_\al$, defined by 
\begin{equation}
\label{Z}
\mathcal{Z} := \{\, f\in H^1_\al \, | \, f=O(m_0) \quad \mathrm{as}\quad |x|\to\infty \, \}. 
\end{equation}
Since functions in $H^1$ are continuous, by the Sobolev embedding theorem, it follows 
that $f\in\mathcal{Z}$ is a continuous function, so there exists some $\K\geq 0$ such that 
$|f(x)|\leq \K \, m_0(x)$ for all $x\in\R$. 
Then for any such $f$ one can define 
\beq\label{knorm}
\K_f:=\sup_{x\in\R}\frac{|f(x)|}{m_0(x)}. 
\eeq 
With this definition, it is easy to check that ${\cal Z}$ is a Banach space with respect to the norm 
\begin{equation}
\label{Zn}
|| f||_{\cal Z}:=||f||_\alpha + \K_f. 
\end{equation}

\subsection{Definition of stability and verification of assumptions}

Henceforth we are going to consider the orbital stability of the 
solution $m_0$ in the Banach space $\cal Z$, 
allowing for translations in the independent variable 
$x$, 
so we begin with a precise definition of this, analogous to  
Definition 1. One of the requirements in \cite{Grillakis87}  
is that $T_s$ should be a unitary operator; but 
the norm $\| \cdot \|_{\cal Z}$ is not invariant under translations, 
while the $H^1$ norm is, so this is involved in the 
definition of stability adopted here.

\begin{Def} \label{stadef}
The solution $m_0$  is {\sl{stable}} if for all $\epsilon >0$ there exists $\delta >0$ 
with the following property. If there is a solution $m(\cdot ,t)$ to (\ref{newb}) in some time interval $[0,t_0)$, 
such that $\| m(\cdot ,0)-m_0 \|_{\mathcal{Z}} <\delta$, 
then $m$ is defined for $0\leq t<\infty$ and 
$$
\sup_{0<t<\infty} \inf_{\xi\in \mathbb{R}}  
\| m(\cdot ,t)-m_0(\cdot - \xi )\|_{H^1}  <\epsilon.
$$
\end{Def}
Our main result is the following.  %theorem.

\begin{Thm}
\label{thethe}
The lefton solution (\ref{mlefton}) is stable in the sense of Definition \ref{stadef}.
\end{Thm}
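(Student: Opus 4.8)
The plan is to apply Theorem~\ref{gss} (the Grillakis--Shatah--Strauss criterion) to the functional $F=-E+kC_2$ with the symmetry group of $x$-translations, checking its three hypotheses in the Banach-space setting of $\mathcal{Z}$ and then verifying the convexity condition $d''(\omega)>0$. The role of the parameter $\omega$ is played by (a function of) $k$, or equivalently by the amplitude $A$; the charge $Q$ is the functional $C_2$ (up to scaling), since the Hamiltonian vector field of $C_2$ with respect to $B$ is trivial --- $C_2$ is a Casimir --- so some care is needed here, and I expect the correct reading is that the ``symmetry'' generating the orbit is genuine $x$-translation, with $T_0'\phi = \partial_x m_0 \neq 0$, and $Q$ the conserved quantity pairing with it via Noether. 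First I would record that assumption~(ii) holds: the map $k\mapsto m_0$ (equivalently $\omega\mapsto\phi$) is smooth by the explicit formula (\ref{mlefton}), the critical-point condition $\delta F/\delta m(m_0)=0$ was already verified in (\ref{1strel}), and $\partial_x m_0\in\mathcal{Z}$ with $\partial_x m_0\neq 0$, with $\partial_x^2 m_0\in\mathcal{Z}$ as well since all derivatives of $m_0$ decay like $m_0$.

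Next I would establish assumption~(iii), the spectral condition on $\mathtt{H}=\delta^2 F/\delta m^2(m_0)$. From the computation already in the excerpt, on $H^1_\alpha$ the quadratic form is proportional to $\int(\alpha v_x^2 - b(b+1)\alpha v^2)\,dx = \int \alpha v_x^2\,dx + (b+1)^2/2\cdot\text{(something)}$ --- wait, more precisely $-b(b+1)=\text{positive}$ when $b<-1$ since $b<0$ and $b+1<0$, so both terms in $\delta^2 C_2(m_0,v)$ are \emph{positive}; thus $C_2$ alone contributes a positive-definite form. The single negative direction of $\mathtt{H}$ must therefore come from the interplay with $E$ through the constraint structure, i.e.\ it appears only after one accounts correctly for the operator $I$ and the fact that $\mathtt{H}-\lambda I$ must fail to be invertible: I would diagonalize $\mathtt{H}$ relative to the $L^2_\alpha$ pairing, write the eigenvalue problem as a Sturm--Liouville problem $-(\alpha v_x)_x - b(b+1)\alpha v = \lambda \alpha v$, and use the fact that $\partial_x m_0$ is an eigenfunction with eigenvalue zero (it is the generator of the translation orbit, and $\mathtt{H}(\partial_x m_0)=0$ follows by differentiating $\delta F/\delta m(m_0)=0$ in $x$). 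Since $\partial_x m_0$ has exactly one zero (it is odd about $x_0$ and $m_0$ is a single bump), Sturm oscillation theory gives exactly one eigenvalue below it, i.e.\ one negative simple eigenvalue, with the zero eigenvalue simple and the rest of the spectrum positive; the essential spectrum is bounded away from zero because the coefficients tend to constants with the right sign at infinity. This is the step I expect to be the main obstacle: making the Sturm--Liouville/oscillation argument rigorous in the weighted, non-compact setting, identifying the essential spectrum, and checking the spectral gap, all while keeping track of the subtlety that the relevant operator is $\mathtt{H}:\mathcal{X}\to\mathcal{X}^*$ and ``spectrum'' means non-invertibility of $\mathtt{H}-\lambda I$.

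I would then handle assumption~(i), local well-posedness with conservation of $E$ and $Q$, by invoking the results of Zhou~\cite{zhou} (cited in the excerpt) for the initial value problem for (\ref{newb}) with $m\in H^1$, together with the persistence-of-sign result (positivity of $m$ propagates, via (\ref{diffeo})) which is exactly what keeps the solution inside a region where $C_2$ is defined and smooth; the $O(m_0)$ decay built into $\mathcal{Z}$ is preserved by the flow because of the transport structure $m(q,t)q_x^b = m(x,0)$, so that initial data close to $m_0$ in $\|\cdot\|_{\mathcal{Z}}$ stays in a neighbourhood where $C_2$ and its variations are controlled. Finally I would compute $d(\omega)=F(m_0)$ explicitly using (\ref{mlefton}), (\ref{kval}) and (\ref{mfirst}) --- this is an elementary integral of powers of $\cosh$ --- and check $d''>0$ on the relevant parameter interval; because the whole family of leftons is obtained from one another by scaling $A$ (with $x_0$ fixed), $d(\omega)$ should be an explicit power of $\omega$, and convexity reduces to an inequality on the exponent, which will come out in the range $b<-1$. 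The concluding paragraph would address the Banach-space modifications flagged in the excerpt: since $\|\cdot\|_{\mathcal{Z}}$ is not translation-invariant one measures the initial perturbation in $\mathcal{Z}$ but the asymptotic distance to the orbit in $H^1$, exactly as in Definition~\ref{stadef}, and one notes that the Lyapunov functional argument of \cite{Grillakis87} --- coercivity of $F$ near $m_0$ modulo the translation direction, via the spectral decomposition from~(iii) and the constraint $Q=\text{const}$ --- goes through because $F$ and $Q$ are $H^1$-continuous on the relevant neighbourhood in $\mathcal{Z}$.
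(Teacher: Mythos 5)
Your overall route---the functional $F=-E+kC_2$, the weighted space $H^1_\alpha$ and the Banach subspace $\mathcal{Z}$, a Sturm--Liouville analysis of ${\tt H}$ with $m_{0,x}$ in the kernel, an explicit power-law computation of $d$, Zhou's well-posedness plus the transport identity (\ref{diffeo}) for persistence of positivity and decay---is essentially the paper's. But there is a genuine error in your spectral discussion. For $b<-1$ one has $b<0$ and $b+1<0$, hence $b(b+1)>0$ and $-b(b+1)<0$: the zeroth-order coefficient in $\delta^2C_2(m_0,v)=\frac{2}{b^2}\int\bigl(\alpha v_x^2-b(b+1)\alpha v^2\bigr)dx$ is \emph{negative}, so $\delta^2C_2(m_0,\cdot)$ is not positive definite. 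Moreover $E$ is linear in $m$, so $\delta^2E\equiv 0$ and ${\tt H}=k\,\delta^2C_2(m_0)$ exactly: there is no ``interplay with $E$'' that could supply the negative direction. The single negative eigenvalue required by assumption (iii) sits entirely inside the second variation of $C_2$ (the potential $\mathrm{Q}_0=-2(b+1)\alpha/b<0$), and it is counted precisely by the oscillation argument you sketch afterwards---which contradicts your positive-definiteness claim. The paper's Lemma \ref{spectrum} carries this out: the substitution $\widetilde{y}=\sqrt{\alpha}\,y$ reduces the problem to a Schr\"odinger operator, Henry's theorem locates the essential spectrum at $[k/(2b^2),\infty)$, and Sturm--Liouville theory gives one simple negative eigenvalue and a simple zero eigenvalue spanned by $m_{0,x}$.

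Second, you cannot literally ``apply Theorem \ref{gss}'': its hypotheses require the charge $Q$ to generate the symmetry via $J\,\delta Q/\delta w=T_0'w$, whereas here $B\,\delta C_2/\delta m=0$ because $C_2$ is a Casimir, so the Noether pairing you invoke is not available---no conserved functional in play generates the translations. The paper's resolution is that the nontrivial $k$-dependence of the critical point replaces the role of $\omega$, and the stability half of the Grillakis--Shatah--Strauss argument is rerun by hand, following \cite{cstrauss3}: a $C^1$ modulation map $s(v)$ from the implicit function theorem (Lemma \ref{GSSlem}); coercivity of ${\tt H}$ on the constrained set $\{\langle C_2'(m_0),y\rangle=0=(m_{0,x},y)_\alpha\}$ under $d''(k)>0$ (Lemma \ref{poslem}); the resulting inequality $E(m_0)-E(m)\ge\frac{\zeta}{4}\|m(\cdot+s(m))-m_0\|_{H^1}^2$ on the level set $C_2(m)=C_2(m_0)$; and a contradiction argument in which data merely close to $m_0$ in $\mathcal{Z}$ (hence generally off the level set) are approximated by $v_n$ with $C_2(v_n)=C_2(m_0)$. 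Your closing paragraph gestures at this coercivity but omits both the $d''$-dependent constrained coercivity lemma and the level-set approximation step, which is where the actual work lies.
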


In order to prove the above result, we now explain how
the assumptions of \cite{Grillakis87} hold, up to 
appropriate modifications. 
While our proof uses the tools developed in \cite{Grillakis87}, our presentation follows that of \cite{cstrauss3} quite closely. 
Here the relevant one-parameter symmetry group is spatial translation,
\beq \label{group} 
T_s \, m(x,t) = m(x+s,t), 
\eeq 
which has the infinitesimal generator $T_0'=D_x$. Both $E$ and $C_2$ are invariant under this symmetry, 
which commutes with the time evolution, i.e. $[D_t,D_x]=0$. We wish to show the stability of the 
$m_0$-orbit for this group. However, unlike the charge $Q$, the functional $C_2$ cannot generate 
the flow of $m_0$ along the orbit, because it is a Casimir. Nevertheless, the fact that $m_0$ is 
a critical point of the functional (\ref{functdef}), with nontrivial dependence on the parameter $k$, 
is sufficient for the methods in \cite{Grillakis87} to work. Note that, since $b$ is negative, compared with 
(\ref{fnal}) we have taken $E\to -E$ in (\ref{functdef}), and for the 
Hamiltonian operator we have $J\to (1-b)^{-1} B$.

Henceforth we assume that $k$ is a free parameter while $A$ is specified by the relation
(\ref{kval}); 
this is to simplify the notation for the stability analysis. 
Furthermore, we assume  %without loss of generality,
that $k$ is positive. 
To consider positive solutions $m$,  
we require $A>0$, which follows 
from %(\ref{kval}) with 
the inverse formula $A(k)=2 k^{b/(b+1)}/(1-b)$ with $k>0$. 

We now consider the three main assumptions from subsection 3.1 in more detail. 
The first of these assumptions concerns local existence and conservation laws.
 
To begin with we briefly discuss local existence of solutions, 
which is the first part of assumption (i) above. 
There are several papers which prove results on 
local existence and blow up of the solutions of (\ref{bfamily}), 
either for particular values of $b$, e.g. in \cite{ce} for $b=2$, 
and in \cite{yin} for $b=3$, or for the whole family 
of equations on the line \cite{zhou}; another  
family of equations that includes the case $b=2$ is treated in \cite{zhourod}. 
Analytic solutions of (\ref{bfamily}) 
are considered in \cite{coclite}, while the equation 
with additional linear dispersion 
(terms proportional to $u_x$ and $u_{xxx}$) is treated in 
\cite{kodzha} and \cite{liu}; for the periodic case, 
see also \cite{christov, katelyn}.  
Following the 
approach for the Camassa-Holm equation in \cite{ce}, one can rewrite 
(\ref{newb}) as a %abstract
quasi-linear evolution equation in $L^2$, that is 
$$ 
m_t + {\tt A} (m)\, m =0, 
$$ 
with the operator ${\tt A} (m) = (g*m)\, D_x + b(g*m)_x \, \mathrm{id}$, 
where  ${\tt A}(m)\in L(H^1,L^2)$ for $m\in H^1$. This way of 
presenting the equation allows the application of Kato's theorem, which gives local well-posedness for $m\in H^1$. 
Alternatively, one can use 
the local existence result 
for (\ref{bfamily}) with solutions $u\in H^s$ for $s>3/2$ 
as stated by Zhou (see \cite{zhou}, Theorem 1.1), which 
is equivalent to $m\in H^{s-2}$; 
so for $u\in H^3 $ this also gives local existence of solutions 
$m\in H^1$. 

Although the preceding results could be modified 
to give a local existence result for $m\in \cal Z$, this 
is not necessary for our purposes. Instead, we will show 
that taking suitable initial data $m(\cdot ,0)\in {\cal Z}\subset H^1$ 
gives global existence of the solution in $\cal Z$, 
as in Theorem \ref{glob} below, 
and the arguments leading up to this only require local existence 
in $H^1$.     
 
The other part of assumption (i) concerns the conservation laws.  
\begin{Lem} \label{energy}
Suppose the initial data $m(\cdot ,0)\in H^1 \cap L^1$ 
is everywhere non-negative. %does not change sign. 
Then the energy $E$ is constant, with 
$$ 
E=\| m \|_{L^1}<\infty 
$$ 
as long as the solution %$m(x,t)$ 
of  (\ref{newb}) %with initial data $m(x,0)$ 
exists. 
\end{Lem}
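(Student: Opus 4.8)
\subsection*{Proof proposal for Lemma \ref{energy}}

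The plan is to establish two facts for every $t$ in the interval of existence of the solution: first, that $m(\cdot,t)$ remains non-negative and lies in $L^1$; and second, that $\frac{d}{dt}\int m\,dx=0$. Granting these, $E(t)=\int m(\cdot,t)\,dx=\|m(\cdot,t)\|_{L^1}$ is finite and equal to $E(0)=\|m(\cdot,0)\|_{L^1}$, which is the assertion. Throughout I use the local existence theory recalled just above: there is a solution with $m(\cdot,t)\in H^1$, equivalently $u=g*m\in H^3$, depending continuously on $t$ on the interval in question. For the non-negativity I would invoke the relation (\ref{diffeo}) together with the characteristic flow (\ref{ivp}): since $x\mapsto q(x,t)$ is a diffeomorphism of the line, $q_x>0$, and $m(q,t)\,q_x^{\,b}=m(x,0)\ge 0$ forces $m(\cdot,t)\ge 0$. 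This is the $H^1$ version of McKean's argument proved by Zhou in \cite{zhou}.

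For the $L^1$ bound I would change variables using the same diffeomorphism. Differentiating $q_t=u(q,t)$ in $x$ gives $(\log q_x)_t=u_x(q,t)$, hence $q_x(x,t)=\exp\!\big(\int_0^t u_x(q(x,s),s)\,ds\big)$. Since $\|u_x(\cdot,s)\|_{L^\infty}\le C\,\|u(\cdot,s)\|_{H^3}$ is continuous in $s$ and therefore bounded on any compact subinterval of the existence interval, there is a constant $C(t)$ with $e^{-C(t)}\le q_x(\cdot,t)\le e^{C(t)}$. Then, substituting $y=q(x,t)$ and using $m(q,t)\,q_x=m(x,0)\,q_x^{\,1-b}$,
\[
\|m(\cdot,t)\|_{L^1}=\int m\big(q(x,t),t\big)\,q_x(x,t)\,dx=\int m(x,0)\,q_x(x,t)^{\,1-b}\,dx\le e^{\,|1-b|\,C(t)}\,\|m(\cdot,0)\|_{L^1}<\infty .
\]
Thus $m(\cdot,t)\in H^1\cap L^1$ and $m(\cdot,t)\ge 0$, and correspondingly $u(\cdot,t)=g*m\in H^3\cap L^1$ (Young's inequality, with $\|g\|_{L^1}=1$) and $u,u_x,u_{xx}\to 0$ as $|x|\to\infty$ by the Sobolev embedding $H^3(\R)\hookrightarrow C^2_0(\R)$.

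For conservation I would use the nonlocal form (\ref{newb}) directly: it can be rewritten as $m_t=-(um)_x+(1-b)\,u_x m$, so
\[
\frac{d}{dt}\int m\,dx=\int m_t\,dx=-\int (um)_x\,dx+(1-b)\int u_x m\,dx .
\]
The first integral vanishes because $um\to 0$ at infinity and $(um)_x=u_xm+um_x\in L^1$ (here $u_x\in C_0\subset L^\infty$, $m\in L^1$, and $u,m_x\in L^2$). For the second, write $\int u_x m\,dx=\int u_x(u-u_{xx})\,dx=\tfrac12\int (u^2)_x\,dx-\tfrac12\int (u_x^2)_x\,dx=0$, the boundary terms vanishing since $u,u_x\to 0$ and $uu_x,\,u_xu_{xx}\in L^1$ by Cauchy--Schwarz. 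The interchange of $\frac{d}{dt}$ and $\int dx$ is justified by Fubini's theorem, since $m_t\in L^1$ with $\int_0^t\!\!\int |m_s|\,dx\,ds<\infty$ by the bounds above. Hence $E(t)=E(0)$, and combining with $m\ge 0$ gives $E=\|m\|_{L^1}$.

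The main obstacle is propagating the $L^1$ integrability to positive times, since the local theory only guarantees $m(\cdot,t)\in H^1$; this is exactly what the change of variables via (\ref{diffeo})--(\ref{ivp}) and the resulting pointwise two-sided bounds on $q_x$ accomplish. Once $m(\cdot,t)\in H^1\cap L^1$ with $m(\cdot,t)\ge 0$ is secured, the remaining steps --- vanishing of the boundary terms and the interchange of $\frac{d}{dt}$ with the integral --- are routine and rely only on the embeddings $H^1\hookrightarrow C_0$, $H^3\hookrightarrow C^2_0$, and Cauchy--Schwarz.
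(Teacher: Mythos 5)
Your proof is correct, but it follows a genuinely different route from the paper's. The paper disposes of the lemma in two lines: conservation of $E$ is read off from the Hamiltonian form (\ref{hamform}), since $B$ is skew-symmetric so that $\frac{d}{dt}E=\langle \frac{\delta E}{\delta m}, \frac{1}{b-1}B\frac{\delta E}{\delta m}\rangle=0$ formally, and for non-negative solutions $E$ coincides with $\|m\|_{L^1}$; the authors then remark that a direct proof can be obtained by adapting Lemma 3.4 of Constantin--Escher \cite{ce} from $b=2$ to general $b$. What you have written is essentially that direct proof, carried out in full: you propagate non-negativity and $L^1$ integrability forward in time via the flow identity (\ref{diffeo})--(\ref{ivp}), with the two-sided bound on $q_x$ coming from the local-in-time $L^\infty$ control of $u_x$ (note this correctly avoids the circularity of using the bound $\|u_x\|_{L^\infty}\le E/2$, which in the paper is only deployed \emph{after} this lemma, in Theorem \ref{zhou}), and you then verify $\frac{d}{dt}\int m\,dx=0$ by writing $m_t=-(um)_x+(1-b)u_x m$ and observing that $\int u_x m\,dx=\int u_x(u-u_{xx})\,dx$ is a perfect derivative, with the boundary terms and the interchange of $\frac{d}{dt}$ with $\int$ justified by the embeddings and Cauchy--Schwarz. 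What each approach buys: the paper's argument is shorter and exploits structure already set up (the Hamiltonian operator and the Casimir framework), but it is formal and delegates the analytic details to the literature; yours is self-contained and makes explicit exactly where $H^1\cap L^1$, non-negativity, and the decay of $u$, $u_x$, $u_{xx}$ enter, at the cost of length. Both are acceptable proofs of the statement.
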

\begin{proof} 
The fact that $E=\int m\, dx$ is conserved follows 
immediately upon noting that (\ref{newb}) 
takes the Hamiltonian form (\ref{hamform}), 
and for non-negative solutions $E$ is the same 
as the $L^1$ norm of $m$.    
(For a direct proof that does not use the Hamiltonian property, 
the  proof of Lemma 3.4 in \cite{ce}, 
for the case $b=2$, can be adapted to all values of $b$.)  
\end{proof}  

Now $m\in H^1_\al $ implies that $m=o(\al^{-1/2})$ for large $|x|$, 
hence $m\in H^1 \cap L^1$,  
so in 
particular the preceding lemma applies 
to initial data in ${\cal Z}\subset H^1$, as does the next result 
on global existence in $H^1$.

\begin{Thm} \label{zhou} 
If $m(\cdot ,0)\in H^1\cap L^1$ is positive then the solution to 
(\ref{newb})  exists globally in $H^1$ and remains positive for all $t>0$.  
\end{Thm}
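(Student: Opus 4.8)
\emph{Proof proposal.} The plan is to combine the local well-posedness in $H^1$ recalled above (Kato's theorem applied to the quasi-linear form of (\ref{newb}), or equivalently Zhou's result at the level $u\in H^3$) with two a priori properties of positive solutions: persistence of positivity, and a uniform $W^{1,\infty}$ bound on $u$ coming from conservation of $E$. Together these rule out finite-time blow-up of $\|m(\cdot,t)\|_{H^1}$, so the standard continuation principle promotes the local solution to a global one.

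First I would work on the maximal existence interval $[0,T)$ in $H^1$. There $u=g*m\in H^3$, which embeds in $C^1$ with bounded derivatives, so $u_x$ is bounded, the flow $q(\cdot,t)$ of (\ref{ivp}) is well defined, and $\partial_t q_x = u_x(q,t)\,q_x$ gives $q_x(x,t)=\exp\!\big(\int_0^t u_x(q(x,s),s)\,ds\big)>0$; hence $q(\cdot,t)$ is an orientation-preserving diffeomorphism of $\R$. The transport identity (\ref{diffeo}), $m(q(x,t),t)=m(x,0)\,q_x(x,t)^{-b}$, then forces $m(q(x,t),t)>0$ wherever $m(x,0)>0$, and since $q(\cdot,t)$ is onto we conclude $m(\cdot,t)>0$ on all of $\R$ for every $t\in[0,T)$ (McKean's positivity argument, which works for any $b\neq 0$). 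With positivity in hand, Lemma \ref{energy} applies on $[0,T)$, so $\|m(\cdot,t)\|_{L^1}=E$ is conserved; and for non-negative $m$ the explicit kernel $g=\frac12 e^{-|x|}$ gives the pointwise estimate $|u_x(x,t)|\le u(x,t)\le\frac12\|m(\cdot,t)\|_{L^1}=\frac12 E$, so $\|u\|_{L^\infty}$ and $\|u_x\|_{L^\infty}$ are bounded uniformly in $t$.

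Next I would close an $H^1$ energy estimate. Pairing (\ref{newb}) with $m$ and integrating by parts (all boundary terms vanishing since $m(\cdot,t)\in H^1\cap L^1$) gives $\frac{d}{dt}\|m\|_{L^2}^2=(1-2b)\int u_x m^2\,dx$, while differentiating (\ref{newb}) in $x$, pairing with $m_x$, and using $u_{xxx}=u_x-m_x$ gives $\frac{d}{dt}\|m_x\|_{L^2}^2\le c_b\|u_x\|_{L^\infty}\big(\|m_x\|_{L^2}^2+\|m\|_{L^2}^2\big)$. Inserting the uniform bound $\|u_x\|_{L^\infty}\le E/2$ and applying Gronwall's inequality yields $\|m(\cdot,t)\|_{H^1}\le\|m(\cdot,0)\|_{H^1}\,e^{C(E)t}$, which stays finite on every bounded time interval. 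Hence $T=\infty$, the solution is global in $H^1$, and by the first step it remains positive for all $t>0$.

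The one delicate point is the mild bootstrap circularity: persistence of positivity --- needed before Lemma \ref{energy} may be invoked --- feeds the conserved $L^1$ norm, which feeds the uniform $W^{1,\infty}$ bound, which in turn is what prevents the $H^1$ norm from blowing up; this is resolved by establishing all these facts on the common interval $[0,T)$ and only afterwards letting $T\to\infty$. A secondary technicality is justifying the differentiability of $q_x$ in $x$ and the integrations by parts at the $H^1$ level, handled by a routine mollification/approximation argument (or by first taking smoother data and passing to the limit). I expect the positivity step --- correctly reading off the sign from (\ref{diffeo}) together with $q_x>0$ --- to be the conceptual heart, the rest being standard energy estimates; this is essentially the line of argument in Zhou's paper \cite{zhou}.
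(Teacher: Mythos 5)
Your proposal is correct and follows essentially the same route as the paper's proof: positivity via the flow map $q$ and the identity (\ref{diffeo}) with $q_x=\exp\big(\int_0^t u_x(q,s)\,ds\big)>0$, the bound $\|u_x\|_{L^\infty}\le E/2$ from $L^1$ conservation, and an $H^1$ Gronwall estimate closing the continuation argument. Your explicit handling of the maximal interval and the positivity-before-$L^1$ ordering is a slightly more careful arrangement of the same ingredients the paper uses (its sketch invokes Lemma \ref{energy} and the same energy identities), so there is no substantive difference in method.
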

\begin{proof} 
This is essentially just a restatement of Theorem  2.5 in the paper by Zhou \cite{zhou}, but here we 
sketch a different proof.
%Recall that blow-up occurs if and only if the wave breaks, i.e. if $u_x\to -%\infty$ occurs at some finite time. 
Note that from (\ref{g}) we have  $u=g*m$, so that $u_x=g_x*m$, and hence 
\beq \label{uxbd} 
\| u_x\|_{L^\infty} \leq \|g_x\|_{L^\infty}\, \| m\|_{L^1}=E/2, 
\eeq
where we have used Lemma \ref{energy}. 
Then an application of integration by parts together 
with  (\ref{newb}) yields 
$$ 
\begin{array}{rcl} 
\frac{d}{dt}\int m^2\, dx & = & 2\int mm_t\, dx \\ 
& = & -2\int umm_x -2b\int u_xm^2\, dx \\ 
& = & (1-2b)\int u_xm^2\, dx. 
\end{array} 
$$ 
A similar calculation shows that 
$$ 
\frac{d}{dt}\int m_x^2\, dx = -(2b+1)\int u_x m_x^2 \, dx+ b\int u_xm^2 \, dx,    
$$ 
which combines with the previous one to give 
$$ 
\begin{array}{rcl}   
\frac{d}{dt} 
\| m\|_{H^1}^2 
& = & (1-b)\int u_xm^2 \, dx -(2b+1)\int u_xm_x^2\, dx   \\ 
& \leq & \frac{E}{2}\, \mathrm{max} (1-b, -2b-1) \,  \| m\|_{H^1}^2 , 
\end{array} 
$$ 
where we have used (\ref{uxbd}) (and note that both $1-b$ and 
$-2b-1$ are positive for $b<-1$, which is the case of interest here). 
Thus from Gronwall's inequality we see that $\|m\|_{H^1}$ remains 
bounded for all $t>0$.  
 
Now for the diffeomorphism $q(x,t)$ 
defined by (\ref{ivp}), 
it follows that $q_x(x,t)$   is positive for all 
$t$. To be precise,  
\beq 
\label{qxbd} 
0<q_x(x,t) =\exp \Big(\int_0^t u_x(q(x,s),s) \, ds\Big) 
\leq \exp(Et/2)  \eeq 
holds for all $t$,  
so from (\ref{diffeo}) it is clear that for every $t>0$, 
$m(x,t)>0$ holds for all $x$, 
since  %provided %that 
$m(x,0)>0$ everywhere. 
\end{proof} %In the proof that follows, 

To see that the functional $C_2$ is conserved, note that as long as $\frac{\delta C_2}{\delta m}$ is defined 
we have 
$$ 
\frac{dC_2}{dt} = \left<\frac{\delta C_2}{\delta m}, m_t\right> = -\frac{1}{(b-1)}\left<\frac{\delta E}{\delta m}, 
B\, \frac{\delta C_2}{\delta m}\right>=0
$$ 
for a dense class of $m\in H^1$. To be precise, from the coefficient of $k$ on the right hand side of 
(\ref{1strel}) one sees that $B\, \frac{\delta C_2}{\delta m}=0$ for 
sufficiently smooth $m$ - 
this is just the statement that $C_2$ is a Casimir -  and for general 
$m$ one should approximate by smooth functions. 
This means that for initial data in the subspace ${\cal Z}\subset H^1$, 
the value of $C_2$ remains constant, and it turns out that 
if $C_2$ is finite then positive solutions remain in this subspace.

\begin{Thm}\label{glob} Suppose that the 
initial data $m(\cdot ,0)\in \cal Z$ is everywhere positive 
with $C_2<\infty$. Then the solution of (\ref{newb}) exists  
globally in $\cal Z$.  
\end{Thm}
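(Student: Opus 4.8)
The plan is to bootstrap from the $H^1$ global existence result (Theorem \ref{zhou}) to global existence in $\cal Z$, the key point being that the extra structure in the $\cal Z$-norm — namely the pointwise bound $|m(x,t)|\le \K\, m_0(x)$ encoded in $\K_f$ — is propagated by the flow as long as $C_2$ stays finite. First I would invoke Theorem \ref{zhou}: since ${\cal Z}\subset H^1_\al$ implies $m(\cdot,0)\in H^1\cap L^1$ by the remark preceding Theorem \ref{zhou}, positivity of the initial data gives a solution existing globally in $H^1$ with $m(x,t)>0$ for all $t$, together with the bound $\|m(\cdot,t)\|_{H^1}\le e^{Ct}\|m(\cdot,0)\|_{H^1}$ coming from the Gronwall estimate in that proof. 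So the only thing left to establish is that the solution stays in $\cal Z$, i.e. that $\|m(\cdot,t)\|_{\cal Z}=\|m(\cdot,t)\|_\al+\K_{m(\cdot,t)}$ remains finite for all $t$.

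Next I would control the weighted Sobolev norm $\|m(\cdot,t)\|_\al$. The natural approach is to differentiate $\int (m^2+m_x^2)\,\al\,dx$ in time, using (\ref{newb}) to substitute for $m_t$, and integrate by parts; since $\al$ is smooth, bounded, and bounded away from zero, and $\|u_x\|_{L^\infty}\le E/2$ by (\ref{uxbd}), all the resulting terms are controlled by $\|m\|_\al^2$ with coefficients depending only on $E$ and on $\sup |\al_x/\al|$, $\sup|\al_{xx}/\al|$ (which are finite because $\al$ is a power of $\cosh$). A Gronwall argument then gives $\|m(\cdot,t)\|_\al<\infty$ for all $t$. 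Separately I would control the pointwise ratio: from the transported conservation law (\ref{diffeo}), $m(q(x,t),t)=m(x,0)\,q_x(x,t)^{-b}$, and comparing with $m_0$ at the point $q(x,t)$ reduces the bound $|m(q,t)|\le \K(t)\,m_0(q)$ to controlling $m_0(x)/m_0(q(x,t))$ together with $q_x^{-b}$. Since $b<-1$, $q_x^{-b}=q_x^{|b|}$ is bounded above by $e^{|b|Et/2}$ via (\ref{qxbd}); and since $q$ is a diffeomorphism that moves points a bounded distance on any finite time interval (because $\|u\|_{L^\infty}\le E/2$), the ratio of the exponentially-decaying profiles $m_0(x)/m_0(q(x,t))$ stays bounded on each $[0,t]$. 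Hence $\K_{m(\cdot,t)}<\infty$, and combining with the previous step, $m(\cdot,t)\in\cal Z$ for all $t$.

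The main obstacle I expect is making the second step — the pointwise ratio bound — fully rigorous and $t$-uniform enough: one must verify that $q(x,t)-x$ is bounded uniformly in $x$ (not just for each $x$), so that $m_0(x)/m_0(q(x,t))$ does not blow up as $|x|\to\infty$, and that the interplay between the super-exponential growth factor $q_x^{|b|}$ and the decay mismatch is genuinely controlled rather than merely formally so. The hypothesis $C_2<\infty$ enters precisely here and in ensuring that the quantities being Gronwalled are finite at $t=0$: $C_2(m(\cdot,0))<\infty$ together with $m(\cdot,0)\in\cal Z$ gives the finiteness of $\|m(\cdot,0)\|_\al$ (via the second-variation-type estimate $\delta^2C_2(m_0,v)\sim \|v\|_\al^2$ applied with $v=m(\cdot,0)-m_0$, or directly from the integrand of $C_2$), so that the Gronwall inequalities start from a finite value. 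Once both $\|m(\cdot,t)\|_\al$ and $\K_{m(\cdot,t)}$ are shown finite for every $t$, the definition (\ref{Zn}) of $\|\cdot\|_{\cal Z}$ immediately yields that the solution exists globally in $\cal Z$.
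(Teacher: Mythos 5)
Your treatment of the pointwise part of the $\cal Z$-norm is essentially the paper's argument: the paper also starts from (\ref{diffeo}) and (\ref{qxbd}), and controls the ratio $\rho=m_0(x)/m_0(q(x,t))$ — there by differentiating $\rho$ in $t$, using $m_0'/m_0=b\tanh\gamma(x-x_0)$ and $\|u\|_{L^\infty}\le E/2$, and applying Gronwall to get $\rho\le e^{-bEt/2}$, which is the same bound your log-derivative/bounded-displacement argument gives (and uniformity in $x$ is immediate, since $|q(x,t)-x|\le \int_0^t\|u(\cdot,s)\|_{L^\infty}ds\le Et/2$ for every $x$). So the obstacle you flag as the main worry is not really the delicate point.

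Where you genuinely diverge from the paper — and where the soft spots are — is the weighted norm $\|m(\cdot,t)\|_\al$. The paper does not run a weighted energy estimate at all: it uses the conservation of $C_2$ together with the pointwise bound, rewriting the integrand of (\ref{casimirs}) as $(m/m_0)^{-2-1/b}(b^{-2}m_x^2+m^2)\al$ to get $\|m\|_\al^2\le b^2\,C_2\,\K_m^{2+1/b}$ for all $t$. This is exactly where the hypothesis $C_2<\infty$ enters; your claim that it is only needed to make $\|m(\cdot,0)\|_\al$ finite is off the mark, since that finiteness is already part of $m(\cdot,0)\in\cal Z$, so in your scheme the hypothesis would be superfluous — a warning sign. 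Your proposed Gronwall on $\int(m^2+m_x^2)\al\,dx$ has two gaps as sketched. First, the claim that all terms are controlled by $\|m\|_\al^2$ with constants depending only on $E$ and $\sup|\al_x/\al|$ is not correct: in the $\int m_x^2\al$ computation the term $-2b\int u_{xx}m m_x\al\,dx$ with $u_{xx}=u-m$ produces a cubic contribution $2b\int m^2m_x\al\,dx=-\tfrac{2b}{3}\int m^3\al_x\,dx$, which needs an $L^\infty$ bound on $m(\cdot,t)$ (available from the $H^1$ bound of Theorem \ref{zhou}, or from your $\K_m$ bound, but then the steps must be ordered accordingly and the Gronwall constant becomes time-dependent). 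Second, and more seriously, to run Gronwall on $\|m(\cdot,t)\|_\al^2$ you must know a priori that this exponentially weighted quantity is finite and differentiable for $t>0$, whereas the $H^1$ theory only gives $m(\cdot,t)\in H^1$; this persistence-of-decay issue requires a truncated-weight or approximation argument that you do not supply. The paper's route via conserved $C_2$ sidesteps both problems, which is precisely why the theorem is stated with the hypothesis $C_2<\infty$.
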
 
\begin{proof}
Writing $\K_m(t)$ to denote the supremum part (\ref{knorm}) of 
the $\cal Z$ norm  of $m(\cdot ,t)$, %$\|\|_{\cal Z}$, 
we have 
$$ 
\K_m (t) =   \mathrm{sup}_{x\in\R}\frac{|m(x,t)|}{m_0(x)} 
=\mathrm{sup}_{x\in\R}\frac{|m(q,t)|}{m_0(q)}
=\mathrm{sup}_{x\in\R}\frac{|q_x^{-b}m(x,0)|}{m_0(q)}, 
$$
using the fact that the solution $q=q(x,t)$ of (\ref{ivp}) is a 
diffeomorphism for all $t\geq 0$, 
followed by (\ref{diffeo}). From this it is clear that 
\beq\label{km}
\K_m(t) \leq \mathrm{sup}_{x\in\R}\frac{|q_x^{-b}|m_0(x)}{m_0(q)}
\, \K_m (0). 
\eeq 
To bound this further, consider $\rho(x,t)=m_0(x)/m_0(q)$, 
which is seen to satisfy  
$$
\frac{\partial\rho}{\partial t}= 
-\rho \, q_t\,  \frac{m_0'(q)}{m_0(q)}
= -b\rho\,  u(q,t)\, \mathrm{tanh}\gamma (q-x_0),  
$$ 
upon using (\ref{ivp}) once again, as well 
as the explicit formula (\ref{mlefton}). 
The hyperbolic tangent is bounded above by 1, and 
$u=g*m$ implies 
$\| u\|_{L^\infty}\leq \| g\|_{L^\infty}\| m\|_{L^1}=E/2$, 
so overall this gives 
$$ 
\frac{\partial\rho}{\partial t}\leq -\frac{bE}{2} \rho, 
$$ 
and hence Gronwall's inequality 
(together with $\rho (x,0) =1$) yields
$\rho (x,t) \leq \exp (-bEt/2)$.  Similarly, the term 
$q_x^{-b}$ in (\ref{km}) has an upper bound obtained 
from (\ref{qxbd}), 
so that overall  
$
\K_m(t) \leq \exp (-bEt)\, \K_m(0)
$ for all %\qquad \mathrm{for} %\quad \mathrm{all} 
$t\geq 0$. 
Now rewriting the integrand in (\ref{casimirs}) as 
$(m/m_0)^{-2-1/b}(b^{-2}m_x^2+m^2)\,\al$ gives the inequality 
$ 
\| m\|_\al^2 \leq b^2 \, C_2 \K_m^{2+1/b}<\infty $, 
so $\| m\|_{\cal Z}$ remains bounded for all $t>0$. 
\end{proof}

For the assumption (ii) above, we note that the bound state $\phi = m_0$ is a smooth function of $k$, it satisfies $T_0'\, m_0\neq 0$, and 
we have also shown that $m_0$ is 
a critical point of the functional (\ref{functdef}). The main 
point to discuss is whether the functionals $E$ and $C_2$ (and hence $F$) 
are twice differentiable in a neighbourhood of $m_0$. 
This turns out to require further restrictions on the allowed variations. 
There is no problem with $E$, as 
for $m=m_0+v$ we have 
$$ 
|E(m)-E(m_0)|=\left|\int
%_{\cb\mathbb{R}} 
v\, dx \right|\leq ||v||_{L_1}\leq(\al^{-1}, |v|)_\al\leq ||\al^{-1}||_\al \, ||v||_\al, 
$$ using the $H^1_\al$ inner product $(,)$ followed by Cauchy-Schwarz, which gives continuity of $E$ in 
$H_\al^1$; and $E$ is also clearly a smooth functional.
 For $C_2$, it is worthwhile to consider  piecewise smooth functions $v\in H^1_\al $ which are asymptotic to a multiple of $e^{-c |x|}$ 
as   $|x|\to\infty$, where $c> -b-\frac{1}{2}$ to ensure $||v||_\al <\infty$. For all such functions, $C_2(v)$ 
exists provided $v>0$. However, if  $-b>c$ then there is at least one choice of sign of $\epsilon$ 
such that $m_0(x)+\epsilon v(x) < 0$ for either positive or negative $x$ of large enough magnitude.   
For example, for the family of positive functions $v_c = \exp(-c|x|)$ we have 
$$ 
 C_2(v_c) =-2\left( \frac{b}{c}+\frac{c}{b}\right)  \qquad  \mathrm{and} \qquad 
||v_c||_\al <\infty \qquad \mathrm{for} \qquad c> -b-\frac{1}{2}.
$$ 
However, for all $\epsilon <0$ 
and $c<-b$  
it is clear that $m_0(x) +\epsilon v_c(x) <0$ whenever $|x|$ 
is sufficiently large. This means that there is no neighbourhood of $m_0$ in $H_\al^1$ 
where $C_2$ is a smooth functional. 

To rectify this problem, we consider a neighbourhood of $m_0$ in $\cal Z$. 
\begin{Lem} \label{c2def} 
For all $v\in\mathcal{Z}$ there exists an $R$ 
such that $C_2(m_0+\epsilon v)$ is a smooth function of $\epsilon$ for $|\epsilon |<R$.  
\end{Lem}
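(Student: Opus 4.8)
The plan is to prove Lemma~\ref{c2def} by dominated convergence for differentiation under the integral sign, with the majorant furnished by the defining decay property of $\mathcal{Z}$. Since $v\in\mathcal{Z}$, the bound (\ref{knorm}) gives a constant $\K_v\geq 0$ with $|v(x)|\leq\K_v\,m_0(x)$ for all $x$, and I would take $R=1/(2\K_v)$ (any fixed $R>0$ if $\K_v=0$, when $v\equiv0$ and the claim is trivial). Then for $|\epsilon|\leq R$ the function $m:=m_0+\epsilon v$ satisfies $\frac{1}{2}m_0\leq m\leq\frac{3}{2}m_0$, so $m>0$ everywhere, and the integrand of (\ref{casimirs}), written without modulus signs, is
\[
f(\epsilon,x)=m^{-1/b}+\tfrac{1}{b^2}\,m^{-2-1/b}m_x^2,\qquad m_x=m_{0,x}+\epsilon v_x .
\]
For each $x$ at which the weak derivatives $m_{0,x},v_x$ exist, $\epsilon\mapsto f(\epsilon,x)$ is $C^\infty$ on $(-R,R)$, being built from polynomials in $\epsilon$ and the smooth map $t\mapsto t^{-1/b}$ on $t>0$.

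The heart of the argument is an $\epsilon$-independent integrable majorant. Using $\frac{1}{2}m_0\leq m\leq\frac{3}{2}m_0$ together with the signs of the exponents ($-1/b>0$ and $-2-1/b<0$ for $b<-1$), one gets $m^{-1/b}\leq C\,m_0^{-1/b}$ and $m^{-2-1/b}\leq C\,\al$ with $\al$ as in (\ref{wt}), while $m_x^2\leq 2m_{0,x}^2+2R^2v_x^2$. Hence $|f|$ is bounded, uniformly for $|\epsilon|\leq R$, by $g:=C\big(m_0^{-1/b}+\al\,m_{0,x}^2+\al\,v_x^2\big)$, and $g\in L^1(\R)$ because: $m_0^{-1/b}$ is a constant multiple of the function $u$ in (\ref{lefton}) and hence decays exponentially; $\al\,m_{0,x}^2=b^2\big(m_0^{-1/b}-m_0/k\big)$ by (\ref{mfirst}), so is likewise integrable; and $\int\al\,v_x^2\,dx\leq\|v\|_\al^2<\infty$ since $v\in H^1_\al$. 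In particular this already shows $C_2(m_0+\epsilon v)=\int f(\epsilon,x)\,dx<\infty$ for $|\epsilon|\leq R$.

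To conclude, I would differentiate $f$ in $\epsilon$. Because $m$ and $m_x$ are affine in $\epsilon$, $\partial_\epsilon^n(m^{-1/b})$ is a single term $c_n\,m^{-1/b-n}v^n$, bounded by $C_n m_0^{-1/b}$ via the $\mathcal{Z}$-bound, while $\partial_\epsilon^n\big(m^{-2-1/b}m_x^2\big)$ is a finite sum of products $\partial_\epsilon^{n-\ell}(m^{-2-1/b})\,\partial_\epsilon^\ell(m_x^2)$ with $0\leq\ell\leq 2$ (since $\partial_\epsilon^\ell(m_x^2)=0$ for $\ell\geq 3$), each factor $\partial_\epsilon^{n-\ell}(m^{-2-1/b})$ being a single term bounded by $C_n\al$ and each surviving $\partial_\epsilon^\ell(m_x^2)\in\{m_x^2,\,2m_xv_x,\,2v_x^2\}$ being bounded by $C_R(m_{0,x}^2+v_x^2)$; thus $|\partial_\epsilon^n f|\leq C_n\,g$ uniformly for $|\epsilon|\leq R$. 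The standard theorem on differentiation under the integral sign then gives that $\epsilon\mapsto C_2(m_0+\epsilon v)$ is $C^\infty$ on $(-R,R)$, with $\frac{d^n}{d\epsilon^n}C_2(m_0+\epsilon v)=\int\partial_\epsilon^n f(\epsilon,x)\,dx$, which proves the lemma. The step I expect to be most delicate is exactly this construction of one integrable majorant valid for every derivative: the arbitrarily negative powers of $m$ produced by repeated $\epsilon$-differentiation are tamed only by the lower bound $m\geq\frac{1}{2}m_0$, and the compensating powers of $v$ are absorbed into the fixed weight $\al$ precisely because of the condition $v=O(m_0)$ built into $\mathcal{Z}$ --- which is the reason $C_2$ fails to be smooth on a full $H^1_\al$-neighbourhood of $m_0$ but becomes smooth on $\mathcal{Z}$.
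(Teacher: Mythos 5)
Your proof is correct and follows essentially the same route as the paper: both use the $\mathcal{Z}$-bound $|v|\leq \K_v\, m_0$ to keep $m_0+\epsilon v$ pointwise comparable to $m_0$ for $|\epsilon|<R$, and then deduce smoothness of $\epsilon\mapsto C_2(m_0+\epsilon v)$ from smoothness and uniform bounds of the integrand of (\ref{casimirs}). The only difference is one of detail: you make explicit the $\epsilon$-independent integrable majorant for all $\epsilon$-derivatives and invoke differentiation under the integral sign, a step the paper's proof leaves implicit when it asserts that the integrand is a bounded differentiable function of $\epsilon$.
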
 
\begin{proof} By replacing $v$ by $\epsilon v$ for suitably small $\epsilon$ if necessary, one 
can assume that $K_v<1$, which implies that $m=m_0+v$ is a positive function, and 
$$ 
\frac{||m||_\al^2 }{b^2(1+\K_v)^{2+\frac{1}{b}}}\leq C_2(m) \leq  \frac{||m||_\al^2 }{(1-\K_v)^{2+\frac{1}{b}}}. 
$$
Then for all $v$ with $\K_v<1$, $C_2(m_0+\epsilon v)$ is defined, and the integrand 
in (\ref{casimirs}) is a bounded differentiable function of $\epsilon$, provided that  $|\epsilon | <1$.
\end{proof} 

In fact the above proof shows that, with respect to the norm  $||\cdot ||_{\cal Z}$, the functional $C_2$ is smooth in 
a ball of radius 1 around $m_0$.  
The preceding considerations make it clear that in order to apply the
results in  \cite{Grillakis87} we should consider initial data in ${\cal Z}$, 
as in Definition 2, 
and all variations must be taken in this subspace of $H^1_\alpha$.  
Having found a suitably restricted class of variations, % in $H^1_\al$, 
we proceed 
to verify assumption (iii) above, 
which concerns the operator 
%We also need to consider the second variation of $F$, which reads
\begin{equation}
\label{op}
{\tt H} \equiv \frac{\delta^2 F}{\delta m^2}(m_0)=%\frac{k}{\al}
k\, \Big(-D_x \mathrm{P}_0 D_x+\mathrm{Q}_0\Big),
\end{equation}
where $\mathrm{P}_0$ and $\mathrm{Q}_0$ are, respectively, the coefficients functions $\mathrm{P}$ and $\mathrm{Q}$ defined after (\ref{2nd})
evaluated at $m=m_0$. 

\begin{Lem} \label{spectrum} 
On 
$H^1_\al$ 
the operator ${\tt H}$  defined in (\ref{op}) 
has only one negative eigenvalue, its kernel is one-dimensional, 
and the positive part of the spectrum 
is bounded below away from zero. 
\end{Lem}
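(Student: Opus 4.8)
The plan is to study the self-adjoint operator ${\tt H} = k(-D_x \mathrm{P}_0 D_x + \mathrm{Q}_0)$ acting on the weighted space $H^1_\al$, using the explicit expression found above,
$$
\delta^2 C_2(m_0, v) = \frac{2}{b^2}\int \big(\al\, v_x^2 - b(b+1)\al\, v^2\big)\, dx,
$$
so that (up to the positive constant $2k/b^2$) the quadratic form attached to ${\tt H}$ is $\int \al(v_x^2 - b(b+1)v^2)\, dx$. First I would change variables to remove the weight: setting $w = \al^{1/2} v$ conjugates ${\tt H}$ to a Schr\"odinger-type operator $L = -D_x^2 + V(x)$ on the flat space $L^2(\R, dx)$, where the potential $V$ is a fixed rational function of $\cosh\gamma(x-x_0)$ coming from $\al^{1/2}$, $\al_x$, $\al_{xx}$ and the constant $-b(b+1) > 0$. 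Since $\al$ decays like $\cosh^{-(2b+1)/\gamma}$ and grows at infinity (recall $2b+1 < 0$ for $b < -1$), one checks that $V(x) \to$ a positive constant $V_\infty$ as $|x|\to\infty$; hence the essential spectrum of $L$ is $[V_\infty, \infty)$, bounded away from $0$, and the rest of the spectrum consists of finitely many simple eigenvalues below $V_\infty$. This takes care of the claim that the positive part of the spectrum is bounded below away from zero, provided we locate the low-lying eigenvalues.

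The second step is to identify the kernel. Differentiating the critical-point relation $\frac{\delta F}{\delta m}(m_0) = 0$ with respect to the translation parameter shows that $T_0' m_0 = D_x m_0 = m_0'$ lies in the kernel of ${\tt H}$; explicitly $m_0' = A\frac{1-b}{2}\cdot\frac{b}{\gamma}\cdot\gamma\sinh\gamma(x-x_0)(\cosh\gamma(x-x_0))^{b/\gamma - 1}$, which indeed belongs to $H^1_\al$ since $\al (m_0')^2$ decays. In the conjugated picture the corresponding eigenfunction $\al^{1/2} m_0'$ has exactly one sign change (at $x = x_0$), so by Sturm oscillation theory for $L = -D_x^2 + V$ it is the \emph{second} eigenfunction, with eigenvalue $0$. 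Sturm theory then forces: the ground state eigenvalue is simple and strictly negative (its eigenfunction has no zeros), the eigenvalue $0$ is simple, and all further spectrum is strictly positive. This simultaneously gives the one negative eigenvalue, the one-dimensional kernel, and — combined with the essential-spectrum bound from step one — the gap above zero. The one technical point to nail down here is that $\al^{1/2} m_0' \in L^2$ genuinely realizes an eigenfunction of the self-adjoint realization of $L$ (i.e. it is in the form domain and the operator domain), which follows from the decay estimates on $m_0$ and its derivatives, and that ${\tt H}$ is bounded below so that the variational/form-domain description applies.

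The main obstacle I expect is not the oscillation argument itself but making the reduction rigorous: one must verify that conjugation by $\al^{1/2}$ is a bijective isometry from $H^1_\al$ onto the appropriate Sobolev space on which $L$ is self-adjoint, that no eigenfunctions or essential spectrum are created or destroyed in the process, and that the resulting $V$ is indeed bounded and asymptotically constant (so in particular $L$ has compact resolvent modulo a constant shift and Sturm theory applies to the discrete part). A secondary subtlety is confirming that $\al^{1/2} m_0'$ has \emph{exactly} one zero — this is immediate from the explicit formula, since $\sinh$ vanishes only at $x_0$ — so the counting of negative eigenvalues is clean. Once the Schr\"odinger reduction is in place the spectral conclusions are a direct application of classical one-dimensional Sturm–Liouville theory, and the lemma follows.
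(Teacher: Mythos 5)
Your proposal follows essentially the same route as the paper: conjugate ${\tt H}$ by $\sqrt{\al}$ to a Schr\"odinger-type operator on the flat $L^2$ space, identify the continuous spectrum as $[V_\infty,\infty)$ with $V_\infty>0$ from the asymptotics of $\al$, observe that $\sqrt{\al}\,m_{0,x}$ is a zero-eigenvalue eigenfunction with exactly one zero, and invoke one-dimensional Sturm--Liouville oscillation theory to conclude there is one simple negative eigenvalue, a one-dimensional kernel, and a spectral gap above zero. One small correction: for $b<-1$ the constant $-b(b+1)$ is in fact negative, so the positivity of $V_\infty$ is not visible from that term alone but comes from the contribution of conjugation by the weight dominating at infinity (the paper computes the limit explicitly, obtaining the threshold $k/(2b^2)>0$), and this computation does need to be carried out.
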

\begin{proof}
Upon evaluating 
the quantities depending on $m$ 
at $m=m_0$, 
the eigenvalue problem $({\tt H} -\lambda \alpha )y=0$ 
associated to ${\tt H}$ can be written as
\begin{equation}
\label{eqnon}
k\left(-D_x\mathrm{P}_0 D_x+\mathrm{Q}_0\right)y=\lambda\alpha y, 
\end{equation}
where $\mathrm{P}_0=2\alpha /b^2$, $\mathrm{Q}_0 = -2(b+1)\alpha /b$. 
We now make the change of variables 
$\widetilde{y}=\sqrt{\alpha}y$,  
so that the eigenvalue problem becomes
\begin{equation}
\label{opc}
{\tt L}\widetilde{y}=\lambda\widetilde{y},\qquad {\tt L}=k\left(-\frac{2}{b^2}D_{x}^2+\widetilde{\mathrm{Q}}\right),  
\end{equation}
with $\widetilde{\mathrm{Q}}=\alpha^{-1/2}(\alpha^{-1/2}\alpha_x)_x/b^2-2(b+1)/b$. 
To find the continuous spectrum of ${\tt L}$, we define
$${\tt L}^{\infty}=\lim_{x\rightarrow \infty}{\tt  L}=k\left(-\frac{2}{b^2}D_{x}^2+\frac{1}{2b^2}\right),$$
where we have used the fact that $\alpha$ grows like $\exp(-(2b+1)|x|)$ as $|x|\to\infty$. 
The continuous spectrum of ${\tt L}$ is then given
by the set
\begin{equation}
\label{Henry}
%S=
\left\{\lambda\in\,\mathbb{C}\,\, | \,\, {\tt L}^\infty(\sigma) = \lambda \;\; {\mbox{for
some}}\;\;\sigma\in\mathbb{R}\right\}
\end{equation}
(see \cite{Henry81}, Theorem A.2, p.\ 140),  
where ${\tt L}^\infty(\sigma)$ is obtained from ${\tt L}^\infty$ by replacing $D_x$ with $i\, \sigma$.  The continuous spectrum 
of ${\tt L}$ in $H^1$ thus consists of the interval $\left[\frac{1}{2}k/b^2,\infty\right)$ (recalling that $k$ is positive). 
The result then carries over to ${\tt H}$ through the change of variables.

The eigenvalue problem associated with ${\tt L}$ is an irregular Sturm-Liouville problem,  
with endpoints $\pm\infty$ both being limit-points. 
Due to this and the fact that 
the continuous spectrum  is bounded below, 
the discrete spectrum below the continuous spectrum consists of simple eigenvalues which are ordered 
according to the number of zeros of the corresponding eigenvector, with no two eigenvectors having the same number of zeros and
with the lowest eigenvalue corresponding to
an eigenvector with no zero (case 8.iii of Theorem 10.12.1 in \cite{Zettl}).  

The kernel of ${\tt L}$ is found by making the observation that $\widetilde{y}=\sqrt{\alpha}m_{0,x}$ solves (\ref{opc}) for $\lambda=0$. Equivalently, 
it is easy to verify that $T_0' \, m_0 = m_{0,x}$ is in the kernel of ${\tt H}$. 
Since the eigenvalue zero is simple, there is nothing else in the kernel of ${\tt L}$. Furthermore, since
$m_{0,x}$ has one zero, there is one and only one negative eigenvalue. By multiplying (\ref{eqnon}) by $y$ and 
integrating over $\mathbb{R}$, it follows that 
the negative eigenvalue is bounded below by $-2k(b+1)/b<0$. As before, these results carry over to ${\tt H}$ through the change of variables. 
\end{proof}

\subsection{Proof of stability}

In order to carry out the proof of Theorem \ref{thethe}, we introduce another notion from \cite{Grillakis87}. 
\begin{Def} 
The tubular neighbourhoods of $m_0$ in $H^1_\alpha$ and $\cal Z$ are given by 
$$ 
U_\epsilon =\{ f\in H^1_\alpha \vert \inf_{s\in\mathbb{R}} \| f(\cdot + s) -m_0\|_\alpha <\epsilon \} 
$$ 
and 
$$ 
U_\epsilon^{\cal Z} =\{ f\in {\cal Z} \vert \inf_{s\in\mathbb{R}} \| f(\cdot + s) -m_0\|_{\cal Z} <\epsilon \} , 
$$ 
respectively. 
\end{Def} 

\begin{Lem} \label{GSSlem} 
There exist $\epsilon>0$ and a $C^1$  map $s:U_\epsilon \rightarrow \mathbb{R}$ such that 
for every $v\in U_\epsilon$,
\begin{equation}
\label{s}
\left(v(\cdot +s(v)),\,m_{0,x}\right)_{\al}=0.
\end{equation}
\end{Lem}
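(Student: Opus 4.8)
The plan is to produce $s(\cdot)$ by the implicit function theorem applied to a single scalar equation, and then to propagate it across the whole tube $U_\epsilon$ by translation covariance, following the modulation lemmas of \cite{Grillakis87} and \cite{cstrauss3}. Set
\[
g(s,v):=\bigl(v(\cdot+s),\,m_{0,x}\bigr)_{\al}=\int\Bigl(v(x)\,m_{0,x}(x-s)+v_x(x)\,m_{0,xx}(x-s)\Bigr)\al(x-s)\,dx,
\]
the second form coming from the substitution $x\mapsto x-s$. Written this way, the dependence on the translation parameter $s$ rests entirely on the smooth, explicit functions $m_{0,x}(\cdot-s)$, $m_{0,xx}(\cdot-s)$ and $\al(\cdot-s)$, while $v\in H^1_\al$ is left untouched; this is the device that lets one differentiate in $s$ even though the weight $\al$ is not translation invariant. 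The assertion of the lemma is exactly that $g(s(v),v)=0$ for $v\in U_\epsilon$, with $s$ of class $C^1$.

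First I would record the two ingredients needed at the base point $(s,v)=(0,m_0)$. The lefton density (\ref{mlefton}) and the weight (\ref{wt}) are both even about $x=x_0$, whereas $m_{0,x}$ is odd about $x_0$; hence the integrand of $g(0,m_0)=\int\!\bigl(m_0 m_{0,x}+m_{0,x}m_{0,xx}\bigr)\al\,dx$ is odd about $x_0$ and the integral vanishes, so $g(0,m_0)=0$. Differentiating once in $s$ and integrating by parts (all boundary terms vanishing thanks to the exponential decay of $m_0$ and its derivatives recorded after (\ref{mlefton})) gives $\partial_s g(0,m_0)=(m_{0,x},m_{0,x})_{\al}=\|m_{0,x}\|_{\al}^{2}$, which is strictly positive since $m_{0,x}\not\equiv 0$.

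The technical heart is to check that $g$ is of class $C^1$ on $\R\times H^1_\al$. Since $g$ is linear in $v$, this reduces to the joint continuity of $g$ and of $\partial_s g$, and both follow from dominated convergence once one bounds the integrand and its $s$-derivative. Applying Cauchy--Schwarz to $\int|v(x)|\,|(m_{0,x}\al)(x-s)|\,dx\le\|v\|_{\al}\bigl(\int\al(x)^{-1}(m_{0,x}\al)^2(x-s)\,dx\bigr)^{1/2}$ (and similarly for the $v_x$-term), one is left to verify that the explicit weights are square integrable against $\al^{-1}$ uniformly for $s$ in a bounded set; using $m_0\sim e^{b|x|}$ and $\al\sim e^{-(2b+1)|x|}$ as $|x|\to\infty$ (recall $b<-1$), the relevant integrand decays like $e^{-|x|}$, with room to spare, and the same estimate survives the differentiations in $s$. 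Thus $g$ is $C^1$ (indeed $C^\infty$), and the implicit function theorem applies at $(0,m_0)$: there are $\eta>0$, a ball $\mathcal{N}=\{v\in H^1_\al:\|v-m_0\|_{\al}<\delta_0\}$, and a $C^1$ map $\tilde s:\mathcal{N}\to(-\eta,\eta)$ with $\tilde s(m_0)=0$, $g(\tilde s(v),v)=0$, and $\tilde s(v)$ the unique zero of $g(\cdot,v)$ in $(-\eta,\eta)$; shrinking $\delta_0,\eta$ if necessary, we may also assume $\partial_s g>0$ throughout $(-\eta,\eta)\times\mathcal{N}$. To pass to $U_\epsilon$ I use the covariance identity $g(s,v)=g(s-r,v(\cdot+r))$: given $v\in U_\epsilon$ (with $\epsilon$ small enough that the $\al$-ball of radius $\epsilon$ about $m_0$ lies in $\mathcal{N}$), pick $r$ with $w:=v(\cdot+r)$ satisfying $\|w-m_0\|_{\al}<\epsilon$, and set $s(v):=r+\tilde s(w)$, so that $g(s(v),v)=g(\tilde s(w),w)=0$, i.e. (\ref{s}) holds. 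Well-definedness of $s(v)$ (independence of the choice of $r$) and its $C^1$ dependence on $v$ follow from the local uniqueness in the implicit function theorem together with the fact that $m_0$ is \emph{not} invariant under any nonzero translation: the function $\de\mapsto\|m_0(\cdot+\de)-m_0\|_{\al}$ is continuous, vanishes only at $\de=0$, and tends to $+\infty$ as $|\de|\to\infty$ (translating the bump of $m_0$ into the region where $\al$ is exponentially large makes the norm explode), so any two admissible choices of $r$ differ by a bounded amount, and then the strict monotonicity of $g(\cdot,v)$ forces the two candidate values of $s(v)$ to agree; near each point the formula $s(v)=r+\tilde s(v(\cdot+r))$ exhibits $s$ as $C^1$.

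I expect the delicate point to be establishing the $C^1$ dependence (indeed even the mere finiteness) of $g$ in the translation variable on the weighted space $H^1_\al$ — where, unlike in the $H^1$-based framework of \cite{Grillakis87}, translation is not an isometry — together with the bookkeeping required to make the patched map single-valued and $C^1$ on all of $U_\epsilon$. The remainder is a routine application of the implicit function theorem.
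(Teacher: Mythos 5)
Your proposal is correct and follows essentially the same route as the paper: both apply the implicit function theorem to the scalar function $\rho(s)=\left(v(\cdot+s),m_{0,x}\right)_{\al}$ at $(s,v)=(0,m_0)$, using $\rho(0)=0$ and $\rho'(0)=\|m_{0,x}\|_{\al}^2>0$, and then extend the resulting local map to the whole tubular neighbourhood by translation covariance, $s(v(\cdot+r))=s(v)-r$. The extra material you supply (the change of variables putting the $s$-dependence on the explicit weights, the decay estimates giving $C^1$-dependence on $s$ in $H^1_{\al}$, and the discussion of single-valuedness of the patched map) only fills in details the paper leaves implicit, and does not change the argument.
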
 
\begin{proof}
Consider the function $\rho(s)=\left(v(\cdot +s),\,m_{0,x}\right)_{\al}$. We have that $\rho'(s)=\left(v_x(\cdot +s),\,m_{0,x}\right)_{\al}$. Thus, when evaluated at $v=m_0$ and $s=0$, we have $\rho(0)=0$ and  $\rho'(0)=\| m_{0,x}\|^2_{\al}>0$. By the implicit function theorem, there is a ball $B_\epsilon \subset H^1_{\al}$ of radius $\epsilon$ around $m_0$,
 an open interval $\cal I$ around the origin in $\mathbb{R}$, and a $C^1$ map $s:B_\epsilon \rightarrow \cal I$ such that that   
the equation $\rho(s)=0$ has a unique solution $s=s(v)\in \cal I$ for all $v\in B_\epsilon$. The result follows 
by noting that the tubular neighborhood in $H^1_\alpha$ is equivalently defined by  
$U_{\epsilon}=\left\{v(\cdot +r) \vert v\in B_{\epsilon},\;r\in\mathbb{R}\right\}$, and the map $s$ 
extends to the whole of $U_{\epsilon}$ by setting  $s(v(\cdot +r))=s(v)-r$.
\end{proof}

Now we define the scalar function
\begin{equation}
\label{d}
d(k)=F(m_0)\equiv -E(m_0) + k\, C_2(m_0), 
\end{equation}
where the lefton $m_0$, as in (\ref{mlefton}), depends on $k$ via $A(k)=2 k^{b/(b+1)}/(1-b)$.  
\begin{Lem}
\label{poslem}
Suppose that $d''(k)>0$. Then there exists a constant $\zeta >0$
such that if $y\in H^1_\alpha$ and $\left< C_2'(m_0),\,y \right>=0=\left(m_{0,x},y\right)_{\al}$, then
 \begin{equation}
 \label{ine}
 \left< {\tt H} y,\,y\right>\geq \zeta \|y\|^2_{H^1}.
 \end{equation}
 \end{Lem}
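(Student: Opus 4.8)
The plan is to follow the standard Grillakis--Shatah--Strauss coercivity argument: the operator $\tt H$ has a single negative eigenvalue and a one-dimensional kernel spanned by $m_{0,x}$ (Lemma~\ref{spectrum}); on the orthogonal complement of $m_{0,x}$ we are one codimension away from the subspace on which $\tt H$ is positive definite, and the constraint $\langle C_2'(m_0),y\rangle=0$ should pick out exactly the right codimension-one subspace provided $d''(k)>0$. So the strategy is to show that the hypotheses force $y$ into a cone on which the Rayleigh quotient of $\tt H$ is bounded below by a positive multiple of $\|y\|^2_{H^1}$.

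\medskip\noindent\textbf{Step 1: reduce to positivity on $\{m_{0,x}\}^\perp$.} Write $y = a\,\chi + z$, where $\chi$ is the normalized negative eigenfunction of $\tt H$ (eigenvalue $-\mu<0$) and $z \perp_\alpha \chi$, $z\perp_\alpha m_{0,x}$. Since the kernel is spanned by $m_{0,x}$ and the rest of the spectrum is bounded away from zero, $\langle {\tt H}z,z\rangle \geq \nu \|z\|^2_\alpha$ for some $\nu>0$; also $\langle{\tt H}y,y\rangle = -\mu a^2 + \langle{\tt H}z,z\rangle$. The task is to bound $a^2$ in terms of $\|z\|^2_\alpha$, using the constraint.

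\medskip\noindent\textbf{Step 2: exploit the $C_2'$ constraint via $d'$ and $d''$.} Differentiate the identity $F'(m_0)=0$ (i.e. $-E'(m_0)+kC_2'(m_0)=0$) with respect to $k$: since $E''=0$, this gives ${\tt H}\,\partial_k m_0 = -k\,\partial_k\big(C_2'(m_0)\big)$ up to the zero-order term, and more usefully $d'(k)=C_2(m_0)$, $d''(k)=\langle C_2'(m_0),\partial_k m_0\rangle$. The key algebraic fact is that $\langle C_2'(m_0),y\rangle = 0$ together with the relation $\langle C_2'(m_0),\cdot\rangle = \tfrac1k\langle {\tt H}\,\partial_k m_0,\cdot\rangle_{\text{(mod kernel)}}$ lets one express the ``bad'' component $a$ of $y$ along $\chi$ in terms of the projection of $\partial_k m_0$ onto $\chi$ and the overlap $\langle C_2'(m_0),z\rangle$. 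Solving for $a$ and estimating, one gets $-\mu a^2 \geq -(\text{const})\|z\|^2_\alpha$ with the constant controlled by $1/d''(k)$; feeding this into Step~1 and choosing constants gives $\langle{\tt H}y,y\rangle \geq \zeta_0\|y\|^2_\alpha$ for some $\zeta_0>0$. Finally, since $\alpha$ is bounded away from zero only from below, one needs the reverse comparison: here use that on the tubular neighbourhood / on $H^1_\alpha$, $\|y\|^2_{H^1}\leq C\|y\|^2_\alpha$ fails in general, so instead one bounds $\|y\|_{H^1}$ by splitting into a region $|x|\leq M$ (where $\alpha$ is comparable to a positive constant, so $\|y\|_{H^1(|x|\le M)}\lesssim \|y\|_\alpha$) and a tail region $|x|>M$, where the quadratic form $\langle{\tt H}y,y\rangle$ itself — being $\sim \int(\alpha v_x^2 - b(b+1)\alpha v^2)$ with $\alpha\to 0$ exponentially — must be separately shown to dominate $\int_{|x|>M}(y^2+y_x^2)$ after $M$ is taken large; combine the two to get (\ref{ine}) with a possibly smaller $\zeta>0$.

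\medskip\noindent\textbf{The main obstacle} I anticipate is precisely this last comparison between the $H^1$ norm and the quantities naturally controlled by $\tt H$ and the $\alpha$-weighted norm: because $\alpha$ decays exponentially, $\|\cdot\|_{H^1}$ is \emph{stronger} than $\|\cdot\|_\alpha$ in the tails, so coercivity in $\|\cdot\|_\alpha$ does not automatically give coercivity in $\|\cdot\|_{H^1}$. Resolving it requires using the structure of $\tt H$ more carefully — e.g. that $\langle {\tt H}y,y\rangle = k\int(\tfrac{2}{b^2}\alpha y_x^2 - \tfrac{2(b+1)}{b}\alpha y^2)\,dx$ and, via a Hardy-type or direct integration-by-parts estimate adapted to the exponential weight, that this controls a genuine $H^1$ quantity on the tails; alternatively one exploits that $y\in H^1$ already (so $y\to 0$ at infinity) to trade tail decay for interior control. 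The algebraic bookkeeping of Step~2 relating $a$, $d''(k)$ and $\langle C_2'(m_0),z\rangle$ is routine once the comparison is in place, as is the sign bookkeeping coming from the replacements $E\to -E$, $J\to(1-b)^{-1}B$ already noted in the text.
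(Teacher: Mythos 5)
Your Steps 1--2 are essentially the paper's argument (and the standard Grillakis--Shatah--Strauss one): decompose along the negative eigenfunction, the kernel direction $m_{0,x}$ and the positive subspace, and use $d''(k)>0$ together with the constraint $\langle C_2'(m_0),y\rangle=0$ to control the negative component. Two remarks on the bookkeeping you call routine: the exact relation is $C_2'(m_0)=-{\tt H}\,m_{0,k}$ (differentiate $-E'(m_0)+kC_2'(m_0)=0$ in $k$ and use $E''=0$, ${\tt H}=kC_2''(m_0)$; there is no factor of $1/k$ and no ``mod kernel'' correction), and the paper converts positivity on the constrained set into a quantitative bound $\langle{\tt H}y,y\rangle\geq\hat\zeta\|y\|_\al^2$ via the ${\tt H}$-orthogonal splitting ${\cal S}=\tilde P\oplus\mathrm{Span}\{\psi\}$ with $\psi=\tilde a\chi+p_0$, using coercivity of ${\tt H}$ on the positive subspace; your sketch stops at strict positivity, which by itself does not give a uniform constant $\zeta$, so this step does need to be spelled out.

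The genuine problem is your ``main obstacle'', which rests on a false premise about the weight. From (\ref{wt}), $\al=m_0^{-2-1/b}$ with $-2-1/b\in(-2,-1)$ for $b<-1$, and since $m_0$ decays like $e^{-|b|\,|x|}$, the weight $\al$ \emph{grows} like $\exp(-(2b+1)|x|)$ at infinity (with $-(2b+1)>0$); in particular $\al$ is bounded below by its minimum $\al(x_0)=k^{-(2b+1)/(b+1)}$, which is exactly why $H^1_\al\subset H^1$ and why the paper can finish in one line with $\|y\|_{H^1}^2\leq k^{(2b+1)/(b+1)}\|y\|_\al^2$, i.e.\ (\ref{kbd}). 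So the $H^1$ norm is the \emph{weaker} norm here, coercivity in $\|\cdot\|_\al$ immediately implies (\ref{ine}), and no splitting into a compact region plus tails is needed. Worse, the cure you propose would not work under your own premise: if $\al$ really decayed exponentially, the form $k\int(\tfrac{2}{b^2}\al y_x^2-\tfrac{2(b+1)}{b}\al y^2)\,dx$ could not dominate $\int_{|x|>M}(y^2+y_x^2)$ for any choice of $M$ (take $y$ supported far out in the tail), and no Hardy-type inequality rescues a weighted form with vanishing weight against an unweighted one. So the final step of your proposal, as written, is a gap --- though it closes trivially once the direction of the comparison between $\|\cdot\|_{H^1}$ and $\|\cdot\|_\al$ is corrected.
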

 \begin{proof} 
Using primes to denote variational derivatives $\delta /\delta m$, we 
differentiate the relation  
$F'(m_0) = - E'(m_0) + kC_2'(m_0)=0$ with respect to $k$, to find  
$C_2'(m_0)=-{\tt H}m_{0,k}$.
Furthermore, $d'(k)=C_2(m_0)$ and thus 
$d''(k)=\left<C_2'(m_0),\,m_{0,k}\right>=-\left<{\tt H}m_{0,k},\,m_{0,k}\right> >0$. 

We now consider the spectral decomposition with respect to the eigenvalue problem 
$({\tt H}-\lambda I )y=0$, where $I=\alpha - D_x \alpha D_x$ is the isomorphism 
from $H^1_\alpha$ to its dual; the properties of the associated spectrum are as
described in Lemma \ref{spectrum} (even if the non-zero eigenvalues and 
corresponding eigenvectors are different 
from those of the problem  $({\tt H}-\lambda \alpha )y=0$ treated in its proof).  
Letting $\chi$ denote the negative eigenvector, such that ${\tt H}\chi=-\mu^2 I \chi$, with 
$\| \chi \|_\alpha =1$, we expand $m_{0,k}=a_0\chi+b_0 m_{0,x}+p_0$ for some $p_0\in P$, where $P$ is the positive subspace for 
${\tt H}$, and $a_0$ and $b_0$ are constants.  Then 
$\left<{\tt H}m_{0k},\,m_{0k}\right><0$, as above, implies $\left<{\tt H}p_0,p_0\right> < a_0^2\mu^2$. 
 
Next take $y$ belonging to the subspace ${\cal S}\subset H^1_\alpha$ defined by the pair of conditions 
$\left< C_2'(m_0),\,y \right>=0=\left(m_{0,x},y\right)_{\al}=0$. On the one hand, 
by the second condition, every such $y$ has the unique representation
$y=a\chi +p$ for some $p\in P$ and constant $a$. The first condition then yields  
$$
 0=-\left< C_2'(m_0),\,y \right>=\left< {\tt H}m_{0,k},\,y \right>=-a_0 a \mu^2+\left<{\tt H}p_0,\,p\right>, 
$$
and a direct calculation (as in the proof of Theorem 3.3 in \cite{Grillakis87})  
shows that $\left<{\tt H}y,\,y\right> >0$ 
for all non-zero $y$.  
On the other hand, observe that there is the direct sum decomposition 
${\cal S}=\tilde{P} \oplus \mathrm{Span} \{ \psi \}$, where $\tilde{P}$ is the subspace 
consisting of all $\tilde{p}\in P$ such that $\left< {\tt H}p_0,\tilde{p} \right> = 0$, 
and $\psi  = \tilde{a}\chi +p_0$ with $\tilde{a} = \frac{\left<{\tt H}p_0,p_0\right>}{a_0\mu^2}$. 
Moreover, this is an orthogonal direct sum with respect to the bilinear form defined by ${\tt H}$. 
%$ \left<{\tt H} y,\tilde{y}\right>$. 
This form is positive definite on  ${\cal S}$ 
and coercive on $\tilde{P}$, since by Lemma \ref{spectrum} there exists some 
$\tilde{\zeta}>0$ such that $ \left<{\tt H} p,p\right> > \tilde{\zeta}\| p\|_\alpha^2$ for all 
$p\in P$. Upon writing any $y\in\cal S$ as $y = \tilde{p} + \tau \psi $ for some constant $\tau$, 
$ \left<{\tt H} y,y\right> =  \left<{\tt H} \tilde{p},\tilde{p}\right>+\tau^2 \left<{\tt H} \psi ,\psi \right>$ 
and $\| y\|_\alpha^2 \leq 2 \| \tilde{p}\|_\alpha^2 + 2\tau^2 \| \psi \|_\alpha^2$ together 
imply $ \left<{\tt H} y,y\right> \geq \hat{\zeta} \| y \|_\alpha^2$, where 
$\hat{\zeta} =\frac{1}{2} \mathrm{min}(\tilde{\zeta}, \| \psi \|_\alpha^2/\left<{\tt H} \psi ,\psi \right>)$. 
The inequality (\ref{ine}) follows by noting that
\beq 
\| y \|_{H^1}^2 \leq k^{(2b+1)/(b+1)}\| y \|_\alpha^2\label{kbd} 
\eeq  
for all $y\in H^1_\alpha$.
\end{proof}
 
We are now ready to prove Theorem \ref{thethe}.
%\begin{proof}
First of all, to verify that $d''(k)>0$, note that 
using (\ref{mfirst})  in (\ref{d}) yields 
$$
d(k) =  2\int%_{\cb\mathbb{R}} 
(km_0^{-1/b}-m_0)\, dx, 
$$  
and this is just proportional to $k^{b/(b+1)}$.  
Then we compute %that
\begin{equation}
\label{dpp}
d''(k)=\frac{-2\hat{K}bk^{-1-1/(b+1)}}{(b+1)^2}, %\int_{-\infty}^{\infty}{m_0\,dx}>0, 
\end{equation}
where 
$$ 
\hat{K}=\int%_{\cb\mathbb{R}} 
\Big({\mbox{sech}}\gamma x\Big)^{1/\gamma}\tanh^2 \gamma x\, dx >0, 
%\;\;{\mbox{with}}\;\;\gamma = -\frac{b+1}{2}.
$$ 
with $\gamma = -(b+1)/2>0$ (recalling that $b<-1$ for the leftons). 
We conclude that the quantity in (\ref{dpp}) is positive,  
and thus Lemma \ref{poslem} is applicable. 

We now show that there exists $\epsilon>0$ such that
\begin{equation}
\label{basic}
E(m_0)-E(m)\geq \frac{\zeta}{4}\|m(\cdot +s(m))-m_0 \|_{H^1}^2
\end{equation}
for all $m\in U^{\mathcal{Z}}_\epsilon \subset U_{\epsilon}$ satisfying $C_2(m)=C_2(m_0)$. 

%In order to prove (\ref{basic}), 
To see this, set $m(\cdot +s(m))=(1+a)m_0 +y$, for some $a\in \mathbb{R}$, 
where $y\in H^1_{\alpha}$ is such that $\left<E'(m_0),\,y\right>=k\left<C_2'(m_0),\,y\right>=\int\,y\,dx=0$.
Then  Taylor's theorem with $v=m(\cdot +s(m))-m_0=a m_0+y$ gives 
$$
\begin{aligned}
%C_2(m_0)&=C_2(m)=C_2(m(x+s(m))\\
C_2(m(\cdot +s(m)) &=C_2(m_0)+\left<C_2'(m_0),\,v\right>+{O}(\|v\|^2_{\mathcal{Z}})\\
&=C_2(m_0)+\frac{a}{k}\int m_0\,dx+{O}(\|v\|^2_{\mathcal{Z}}),
\end{aligned}
$$
(where we used the fact that $C_2'(m_0)=1/k$), and also 
$C_2(m_0)=C_2(m)=C_2(m(\cdot +s(m))$
by translation invariance of $C_2$, from which it follows that 
$a={O}\left(\|v\|^2_{\mathcal{Z}}\right)$. 
A Taylor expansion of $F=k\,C_2-E$, with  $F'(m_0)=0$ and $F''(m_0)={\tt H}$, gives
$F(m)=F(m(\cdot +s(m))=F(m_0)+ \frac{1}{2} \left<{\tt H}v,\,v\right>+o\left(\|v\|^2_{\mathcal{Z}}\right)$.  
Using the fact that  $C_2(m)=C_2(m_0)$ once more, together with the estimate of the magnitude of $a$,  
the previous relation yields $ E(m_0)-E(m) = \frac{1}{2}\left<{\tt H}v,\,v\right>+o\left(\|v\|^2_{\mathcal{Z}}\right) 
= \frac{1}{2}\left<{\tt H}y,\,y\right>+o\left(\|v\|^2_{\mathcal{Z}}\right)$. 
But $\left(y,\,m_{0,x} \right)_{\al}=\left(m(\cdot +s(m)),\,m_{0,x} \right)_{\al}=0$ 
using Lemma \ref{GSSlem} and  $(m_0,m_{0,x} )_{\al}=0$.
Therefore Lemma \ref{poslem} applies to $y$, giving %we have that
$E(m_0)-E(m)\geq \frac{\zeta }{2}\|y\|_{H^1}^2+o\left(\|v\|^2_{\mathcal{Z}}\right)$, and 
$\|y\|_{H^1}=\|v-am_0\|_{H^1}\geq\|v\|_{H^1}-|a|\| m_0\|_{H^1}\geq \|v\|_{H^1}-{O}\left(\|v\|^2_{H^1}\right)$, so for $\|v\|_{\mathcal{Z}}$ small enough we have 
$
E(m_0)-E(m)\geq \frac{\zeta}{4}\|v\|_{H^1}^2,
$
which proves (\ref{basic}).

To complete the proof, suppose that $m_0$ is unstable. 
Then there exists a sequence of initial data $m_n(\cdot ,0) \in \mathcal{Z}$ 
for $n=1,2,\ldots $ 
and $\eta>0$ such that
$$
\|m_n(\cdot,0)-m_0\|_{\mathcal{Z}}\rightarrow 0\;\;{\mbox{but}}\;\;\sup_{t>0}\inf_{\xi\in{\mathbb{R}}}\|m_n(\cdot ,t)-m_0(\cdot -\xi)\|_{H^1}\geq \eta,
$$
were $m_n(\cdot ,t)$ is the solution with initial datum $m_n(\cdot ,0)$. Let $t_n$ be the first time so that
\begin{equation}
\label{eta}
 \inf_{\xi\in \mathbb{R}}
\|m_n(\cdot ,t_n)-m_0(\cdot -\xi)\|_{H^1}=\eta .
 \end{equation}
Then as $n\rightarrow \infty$, 
$E(m_n(\cdot ,t_n))=E(m_n(\cdot ,0))\rightarrow E(m_0)$, 
and 
$C_2(m_n(\cdot ,t_n))=C_2(m_n(\cdot ,0))\rightarrow C_2(m_0)$.
Picking a sequence $v_n\in {\mathcal{Z}}$ such that 
$C_2(v_n)=C_2(m_0)$ and 
$\|v_n-m_n(\cdot ,t_n)\|_{\mathcal{Z}}\rightarrow 0$, 
it follows that $\|v_n-m_n(\cdot ,t_n)\|_{H^1}\rightarrow 0$.  
Then for $\eta$ sufficiently small,  we deduce from  (\ref{basic}) 
that 
$$
 \frac{\zeta}{4}\|v_n(\cdot +s(v_n))-m_0\|^2_{H^1} 
\leq E(m_0)-E(v_n) \rightarrow 0, 
$$  
by continuity of $E$. 
By the translation invariance of the $H^1$ norm, 
this means that 
$\|v_n-m_0(\cdot -s(v_n))\|_{H^1}\rightarrow 0$, 
which further implies   
$$\|m_n(\cdot ,t_n)-m_0(\cdot -s(v_n) )\|_{H^1}\rightarrow 0.
$$ 
This contradicts  (\ref{eta}) and completes the proof.
% \end{proof}

\section{Conclusions}

We have established the stability of the lefton solution when $b<-1$. 
These results are a first step towards understanding 
how the soliton resolution conjecture, as described in \cite{tao}, should 
hold for (\ref{bfamily}); this would be consistent with the numerical results of Holm and Staley (see Figure 1). 
However, our notion of stability is rather limited, 
in that it requires solutions that are initially close 
to the lefton with respect to the Banach space norm 
$\| \cdot \|_{\cal Z}$, in order to be close in $H^1$ at subsequent 
times. We expect that 
stability should hold more generally, 
for all initial data that is close 
to the lefton in $H^1$, at least up to the blow up time \cite{zhou};  
but in that context, the methods of \cite{Grillakis87} cannot 
be applied, because $C_2$ is not defined everywhere.   
It would be interesting to carry out further numerical studies 
to test these ideas (by considering perturbations proportional 
to $v_c=e^{-c|x|}$, for instance); the numerical 
integration of (\ref{bfamily}) is a challenging problem in 
itself \cite{chertock}. 

It would also be interesting to see whether similar methods could be used to derive stability results for other ranges of $b$ values, 
and for other explicit solutions (see e.g. \cite{vak}). 
However, for $-1<b<1$ 
there is the problem that explicit analytic formulae for the ``ramp-cliff" profiles are unknown. In the peakon regime $b>1$, there is an explicit formula: the peakon solution 
is given by $u=c\exp(|x-ct|)$, with $m$ being given by a delta function. For the integrable  cases $b=2,3$ the orbital stability of the peakons has been 
proved, but the arguments used in \cite{cstrauss, lin} make essential use of some of the higher conserved quantities for the Camassa-Holm and 
Degasperis-Procesi equations, respectively. As far as we know, for other values of $b$ the only conserved quantities are $E$, $C_1$ and $C_2$, and only $E$ 
makes sense for the peakons.

\vspace{.1in}
 
\noindent {\bf Acknowledgments.} The authors acknowledge support from the 
Isaac Newton Institute for Mathematical Sciences, where discussions on the project reported in this article began.  
We are very grateful to the Mathematisches Forschungsinstitut Oberwolfach, which supported our  
Research in Pairs visit in September 2010. We would also like to thank Adrian Constantin and Walter Strauss for helpful discussions 
and correspondence on related matters, and we thank Darryl Holm and Martin Staley for permission to use Figure 1.   
S.L. gratefully acknowledges the support of the National Science Foundation through grant DMS-0908074. 

\small

\end{document}